\theoremstyle{plain}
\newtheorem{thm}{Theorem}[section]
  \theoremstyle{plain}
  \newtheorem{cor}[thm]{Corollary}
  \theoremstyle{plain}
  \newtheorem{prop}[thm]{Proposition}
  \theoremstyle{remark}
  \newtheorem{rem}[thm]{Remark}
\newcommand{\OO}{\mathcal{O}}
\newcommand{\LL}{\mathcal{L}}
\newcommand{\PP}{\mathbb{P}}
\newcommand{\CC}{\mathbb{C}}
\newcommand{\tst}{T^{*}S}
\newcommand{\C}{\mathcal{C}}
\newcommand{\ZZ}{\mathbb{Z}}
\newcommand{\RR}{\mathbb{R}}
\newcommand{\lgl}{\tilde{\mathfrak{gl}}}
\newcommand{\M}{\mathcal{M}}
\newcommand{\norm}[1]{\left\Vert #1\right\Vert }
\newcommand{\TT}{\mathbb{T}}
\newcommand{\PG}{\PP G}
\begin{document}

\title{Algebraic integrability of confluent Neumann system}

\author{Martin Vuk }

\address{University of Ljubljana, Faculty of Computer and Information Science,
Tržaška cesta 25, SI-1001 Ljubljana, Slovenia}

\ead{\mailto{martin.vuk@fri.uni-lj.si}}

\begin{abstract}
In this paper we study the Neumann system, which describes the harmonic
oscillator (of arbitrary dimension) constrained to the sphere. In
particular we will consider the confluent case where two eigenvalues
of the potential coincide, which implies that the system has $S^{1}$
symmetry. We will prove complete algebraic integrability of confluent
Neumann system and show that its flow can be linearized on the generalized
Jacobian torus of some singular algebraic curve. The symplectic reduction
of $S^{1}$ action will be described and we will show that the general
Rosochatius system is a symplectic quotient of the confluent Neumann
system, where all the eigenvalues of the potential are double. This
will give a new mechanical interpretation of the Rosochatius system. 
\end{abstract}

\ams{37J35, 37J15, 14H70, 14H40}
\submitto{\JPA}
\maketitle

\section{Introduction}

Neumann system describes motion of a particle constrained to the $n$-dimensional
sphere $S^{n}$ under quadratic potential. The potential is given
in ambient coordinates $q=(q_{1},\ldots,q_{n+1})\in\RR^{n+1}$ by
the potential matrix $A=\mathrm{diag}(a_{1},\ldots,a_{n+1})$ as

\[
V(q)=\frac{1}{2}\langle Aq,q\rangle=\frac{1}{2}\sum_{i=1}^{n+1}a_{i}q_{i}^{2}.
\]
In the generic case where all the eigenvalues of the potential $a_{i}$
are different, the Neumann system is algebraically completely integrable
and its flow can be linearized on the Jacobian torus of an algebraic
spectral curve \cite{Moser:CSY:1980,Adler:AdvM:1980,Mumford:TataII:1984}.
A standard approach to study integrable systems is by writing down
the system in the form of Lax equation, which describes the flow of
matrices or matrix polynomials with constant eigenvalues, i.e. the
isospectral flow \cite{Adler:AdvM:1980,audin:CUP:1999}. Eigenvalues
of the isospectral flow are the first integrals of the integrable
system and Lax representation maps Arnold-Liouville tori into the
real part of the isospectral manifold consisting of matrices with
the same spectrum. A quotient of the isospectral manifold by a suitable
gauge group is in turn isomorphic to the open subset of the Jacobian
of the spectral curve \cite{Beauville:Acta:1990}. 

Two different Lax equations are known for a generic Neumann system
with $n$ degrees of freedom: one is using $(n+1)\times(n+1)$ matrix
polynomials of degree $2$ \cite{Adler:AdvM:1980} and the other is
using $2\times2$ matrix polynomials of degree $n$ \cite{Mumford:TataII:1984,Beauville:Acta:1990}.
The $(n+1)\times(n+1)$ Lax equation was used by Audin \cite{Audin:ExM:1994}
to describe the Arnold-Liouville tori for geodesic motion on an ellipsoid,
which is equivalent to the Neumann system.

In contrast to the generic Neumann system, the special confluent case
in which some eigenvalues of the potential coincide has not received
much attention. In this paper we will consider the confluent case
with two of the eigenvalues being the same and the Neumann system
having an additional $S^{1}$ rotational symmetry. We will show that
the confluent Neumann system is algebraically completely integrable
and that its flow can be linearized on the generalized Jacobian of
a singular algebraic curve. We will describe the symplectic reduction
of the $S^{1}$ action, which will yield an alternative description
of the Rosochatius system as a symplectic quotient of the confluent
Neumann system. Mechanically speaking, Rosochatius system can be seen
as a Neumann system on a rotating sphere. Inspired by this result
we will describe general Rosochatius system as a reduction of confluent
Neumann system with all eigenvalues of the potential being double.
This result combined with the proof of integrability of the confluent
Neumann system will also give an alternative proof of the algebraic
integrability of the Rosochatius system.

When applying the $2\times2$ Lax equation to the confluent case,
the resulting Lax equation in fact describes the Rosochatius system
and not the confluent Neumann system \cite{Adams:Harnad:CMP:1988,Hurtubise:CMP:1990,Harnad:LNPhys:1993}.
We will therefore use $(n+1)\times(n+1)$ Lax equation, where the
resulting spectral curve is singular in the confluent case. Following
the standard procedure and normalizing the spectral curve results
in the loss of one degree of freedom. In order to avoid that, we will
use the generalized Jacobian of the singular spectral curve to linearize
the flow as in \cite{Gavrilov:CRMProc:2002,Vivolo:EdMathSoc:2000}.
The generalized Jacobian is an extension of the {}``ordinary'' Jacobian
by a commutative algebraic group (see \cite{Serre:Springer:ClassFields}
for more detailed description). In our case the extension will be
the group $\CC^{*}$ that corresponds to the rotational symmetry of
the initial system. Generalized Jacobian was used by others to linearize
the flow of other integrable systems with rotational symmetry for
example spherical pendulum or Lagrange top \cite{Zhivkov:Ens:1998,Vivolo:EdMathSoc:2000,audin:CMP:2002,Gavrilov:CRMProc:2002}.
Our study however will give a more detailed description of the relation
between symplectic reduction and algebraic reduction from generalized
to ordinary Jacobian.

After the introduction, Hamiltonian reduction and Liouville integrability
of the confluent Neumann system are discussed in section \ref{sec:Symmetric-Neumann-system}.
In section \ref{sec:Lax-representation-of} we study $(n+1)\times(n+1)$
Lax equation and corresponding isospectral manifolds. Our main result
is formulated in theorem \ref{thm:Arnold-Jacobian} and describes
the relation of Arnold-Liouville tori to the generalized Jacobian
of the singular spectral curve. As a corollary the complete algebraic
integrability and Liouville integrability of the confluent Neumann
system will follow. We conclude with proposition \ref{pro:bif-diag},
which describes the bifurcation diagram of the energy momentum map
in terms of algebraic data.

\section{Hamiltonian description\label{sec:Symmetric-Neumann-system}}

The Neumann system describes a particle on a sphere (of arbitrary
dimension) under the influence of quadratic potential. We can write
it as a Hamiltonian system on the cotangent bundle of the sphere $\tst^{n}$
with canonical symplectic form $\omega_{c}$ and the Hamiltonian $H$
given in ambient coordinates $(q,p)\in\RR^{n+1}\times\RR^{n+1}$ as

\[
H(q,p)=\frac{1}{2}\left(\norm{p}^{2}+\langle q,Aq\rangle\right).
\]
The potential is given by a positive definite linear operator $A$
on $\RR^{n+1}$. For simplicity we will assume that $A$ is diagonal
with positive eigenvalues $a_{i}$. A consequence of positivity is
that the Hamiltonian is proper and the energy level sets - and hence
Arnold-Lioville tori - are compact. The equations of motion in Hamiltonian
form are 
\begin{eqnarray}
\dot{q} & = & p\label{eq:hamiltonian-system}\\
\dot{p} & = & -Aq+\varepsilon q\nonumber 
\end{eqnarray}
 where $\varepsilon=\norm{p}^{2}+c$ is chosen so that $\norm{q}=1$
and the particle stays on the sphere.

\subsection{Reduction of the symmetry}

Let us consider the confluent case where all the eigenvalues $a_{i}$
of the potential $A$ are distinct, except $a_{n}=a_{n+1}$. The action
$\varphi'$ of $S^{1}=SO(2,\RR)$ on $S^{n}$, given by rotations
in the $q_{n},q_{n+1}$ plane, leave the potential invariant and can
be lifted to the symplectic action 

\[
\varphi:\tst^{n}\times S^{1}\to\tst^{n}
\]
on the cotangent bundle that leaves the Hamiltonian $H$ invariant.
We would like to reduce this action and describe the resulting reduced
system in more detail.

Let $K$ be the moment map for the lifted action $\varphi$. The map
$K$ is a real function on $\tst^{n}$ as $\mathfrak{so}(2)^{*}\simeq\RR$
and is the \emph{angular momentum} for the rotations in $q_{n},q_{n+1}$
plane 
\[
K(q,p)=q_{n}p_{n+1}-q_{n+1}p_{n}.
\]
Since the action $\varphi$ is not free, the description of reduced
system is more complicated. Let us denote with $F$ the set of points
on the sphere, where the action $\varphi'$ is not free. This is precisely
the fixed point set given by the condition 
\[
r^{2}=q_{n}^{2}+q_{n+1}^{2}=0
\]
The set $F$ is a codimension-$2$ great sphere on $S^{n}$. Note
that on $F$, the value $k$ of the moment map $K$ equals $0$. The
set of regular points $S_{reg}^{n}=S^{n}-F$ is an open subset of
$S^{n}$ on which the action is free. 

The reduced system is defined on symplectic quotients $(M_{k},\omega_{k})$,
which are parametrized by the value $k$ of the moment map $K$. We
will use the operator $//_{k}$ to denote the symplectic quotient.
By definition 
\[
M_{k}=\tst^{n}//_{k}S^{1}:=K^{-1}(k)/S^{1}
\]
 and $\omega_{k}$ is defined with $\pi^{*}\omega_{k}=\omega_{c}|_{K^{-1}(k)}$,
where $\pi$ is the quotient projection. For $k\not=0$ the fiber
$K^{-1}(k)$ is a corank-$1$ sub-bundle of the cotangent bundle over
the set of regular points $\tst_{reg}^{n}$. Since the action is free
on $K^{-1}(k)$, the standard result for lifted actions gives 
\[
(M_{k},\omega_{k})=(T^{*}(S_{reg}^{n}/S^{1}),\omega_{c}+\omega_{p}),
\]
where $\omega_{p}$ is the magnetic term coming from the curvature
of the mechanical connection. In our case the mechanical connection
is flat and $\omega_{p}=0$ \cite{Marsden:LMS:1992}. The quotient
manifold $S_{reg}^{n}/S^{1}$ is a dimension-$(n-1)$ open half sphere
$S_{+}^{n-1}$ as we can see if we introduce cylindrical coordinates
$(q_{1},\ldots,q_{n-1},r(\cos\varphi,\sin\varphi))$. The set $S_{reg}^{n}$
is given by the condition $r\not=0$ and the quotient $S_{reg}^{n}/S^{1}$
can be parametrized by $(q_{1},\ldots,q_{n-1},r)\in S^{n-1}$ for
$r>0$. We can also see directly by using cylindrical coordinates
that the perturbation $\omega_{p}$ of the canonical symplectic form
is zero.

For $k=0$ the description of $M_{0}$ is more complicated, since
the action on $K^{-1}(0)$ is not free, and $M_{0}$ is not a manifold.
We will remedy this by considering a singular double cover of the
reduced phase space instead.

The set $K^{-1}(0)$ is not a sub-bundle of $\tst^{n}$, because at
fixed points $r=0$ the fiber has full rank and over $S_{reg}^{n}$
the fiber has codimension $1$. However, we can still pass the quotient
on the base manifold because the action is a cotangent lift. The quotient
space $S^{n}/S^{1}$ is a dimension-$(n-1)$ closed half sphere $(S_{+}^{n-1})^{c}$
and can be parametrized by cylindrical coordinates $(q_{1},\ldots,q_{n-1},r)\in S^{n-1}$,
with $r\ge0$. The symplectic quotient $M_{0}$ is a cotangent bundle
over $S^{n}/S^{1}$, in the sense that the fiber at a singular point
with $r=0$ is a closed half plane of {}``positive differentials''
(the differential, which is positive at the directions normal to the
boundary of $S^{n}/S^{1}$). In that sense we can write
\[
(M_{0},\omega_{0})=(T^{*}(S_{+}^{n-1})^{c},\omega_{c}).
\]

To conclude the construction of the reduced system we have to calculate
the reduced Hamiltonian. This can be easily done using the parametrization
with cylindrical coordinates $(\hat{q},\hat{p})=((q_{1},\ldots,q_{n-1},r),(p_{1},\ldots,p_{n-1},p_{r}))$
\begin{eqnarray*}
H(q,p) & = & \frac{1}{2}\left(\sum_{i=1}^{n-1}\left(p_{i}^{2}+a_{i}q_{i}^{2}\right)+p_{n}^{2}+p_{n+1}^{2}+a_{n}(q_{n}^{2}+q_{n+1}^{2})\right)=\\
 & = & \frac{1}{2}\left(\sum_{i=1}^{n-1}p_{i}^{2}+p_{r}^{2}+\frac{k^{2}}{r^{2}}+\sum_{i=1}^{n-1}a_{i}q_{i}^{2}+a_{n}r^{2}\right)=\\
 & = & \frac{1}{2}\left(\norm{\hat{p}}^{2}+V_{k}(\hat{q})+\langle\hat{q},\hat{A}\hat{q}\rangle\right)=H_{r}(\hat{q},\hat{p}),
\end{eqnarray*}
where $\hat{A}=\text{diag}(a_{1},\ldots,a_{n})$. The additional term
$V_{k}(\hat{q})=\frac{k^{2}}{r^{2}}$ corresponds to the centrifugal
system force, induced by the rotation. 

To simplify the description of the reduced system, we will use the
whole sphere $S^{n-1}$ instead of only half of it. The half sphere
$(S_{+}^{n-1})^{c}$ can be also viewed as a quotient of $S^{n-1}$
by the group $\ZZ_{2}$ acting with reflections $r\mapsto-r$. The
reduced Hamiltonian $H_{r}(\hat{q},\hat{p})$ can be lifted on $(\tst^{n-1},\omega_{c})$,
since it only depends on $r^{2}$ and $p_{r}^{2}$ and is invariant
for $\ZZ_{2}$ action. We have thus constructed a Hamiltonian system
$(\tst^{n-1},\omega_{c},H_{r})$ such that its reduction coincides
with the reduced Neumann system. 

\begin{thm}
The singular symplectic quotient $(M_{k},\omega_{k},H_{r})$ of the
Neumann Hamiltonian system $(\tst^{n-1},\omega_{c},H_{A})$ with potential
given by the matrix $A=\text{diag}(a_{1},\ldots,a_{n},a_{n})$ is
isomorphic to the quotient by the $\ZZ_{2}$ action of the perturbed
Neumann system $(\tst^{n-1},\omega_{c},H_{\hat{A}}+V_{k})$ with potential
matrix $\hat{A}=\mathrm{diag}(a_{1},\ldots,a_{n})$.
\end{thm}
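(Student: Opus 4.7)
The plan is to carry out the symplectic reduction explicitly in cylindrical coordinates adapted to the $S^1$-action, and then recognize the resulting reduced system as the $\ZZ_2$-quotient of a perturbed Neumann system on $\tst^{n-1}$. The argument is essentially a condensation of the computation carried out in the paragraphs preceding the theorem, together with a careful treatment of the singular value $k=0$.

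First I introduce coordinates $(q_1,\ldots,q_{n-1},r,\varphi)$ on $S^n$ with $q_n=r\cos\varphi$, $q_{n+1}=r\sin\varphi$, and their canonical cotangent lifts $(\hat{p},p_r,p_\varphi)$ on $\tst^n$. The $S^1$-action becomes translation in $\varphi$, so its cotangent-lifted moment map is $K=p_\varphi$, while $\omega_c$ splits as a sum of wedge products over the $\hat{q}$, $r$ and $\varphi$ coordinates. This confirms in one line that the mechanical connection is flat and the magnetic term $\omega_p$ vanishes. A direct substitution yields
\[
H_A(q,p) = \tfrac{1}{2}\norm{\hat p}^2 + \tfrac{1}{2}\langle \hat{q},\hat{A}\hat{q}\rangle + \tfrac{p_\varphi^2}{2r^2},
\]
so after restriction to $\{p_\varphi=k\}$ and taking the quotient by $\varphi$-translations we obtain $(M_k,\omega_k,H_r)$ with $H_r = H_{\hat A}(\hat q,\hat p) + V_k(\hat q)$, defined on $T^*$ of the open half-sphere $\{r>0\}\subset S^{n-1}$ with canonical symplectic form.

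Next I would observe that $H_{\hat A}+V_k$ on the full $\tst^{n-1}$ depends only on $r^2$ and $p_r^2$, hence is invariant under the cotangent lift of the reflection $r\mapsto -r$, which is the $\ZZ_2$-action in the statement. The quotient $\tst^{n-1}/\ZZ_2$ is exactly the singular cotangent bundle over the closed half-sphere $(S_+^{n-1})^c$ used to model $M_k$, and the quotient Hamiltonian coincides with $H_r$. This produces the desired isomorphism for $k\ne 0$.

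The main subtlety, which I would treat most carefully, is the singular value $k=0$: there $K^{-1}(0)$ is not a sub-bundle because $K$ vanishes identically on the fixed locus $F=\{r=0\}$. I would argue stratum by stratum. Over $S_{reg}^n$ the computation above still yields the open dense part of $(M_0,\omega_0,H_r)$. Along $F$ the symplectic slice collapses to the zero section above $\{r=0\}\subset(S_+^{n-1})^c$; on the model side $V_0\equiv 0$ and the $\ZZ_2$-fixed locus $\{r=0,\,p_r=0\}\subset\tst^{n-1}$ maps exactly onto the same singular stratum. Matching the two stratified symplectic spaces and the Hamiltonians on each stratum then gives the claimed isomorphism of singular reduced systems, completing the proof.
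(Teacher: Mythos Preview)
Your proposal is correct and follows essentially the same approach as the paper: the paper's ``proof'' is precisely the computation in cylindrical coordinates carried out in the paragraphs preceding the theorem, together with the observation that the closed half-sphere $(S_+^{n-1})^c$ is the $\ZZ_2$-quotient of $S^{n-1}$ and that $H_r$ lifts because it depends only on $r^2$ and $p_r^2$. Your treatment of the singular level $k=0$ via orbit-type strata is slightly more explicit than the paper's bundle description (half-space fibres over the boundary), but the two pictures are equivalent; just be careful that over $F$ the reduced fibre is a closed half-space rather than merely the zero section, so the matching is between the $S^1$-fixed locus $\{r=0,\,p_n=p_{n+1}=0\}$ and the $\ZZ_2$-fixed locus $\{r=0,\,p_r=0\}$, with the remaining half-space fibres corresponding on both sides.
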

The symplectic quotient of confluent Neumann system is in fact a special
case of Rosochatius system, which was studied in \cite{Adams:Harnad:CMP:1988}
for example. A general Rosochatius system is a Hamiltonian system
on $\tst^{n}$ with potential
\begin{equation}
V=\frac{1}{2}\left(\sum_{i=1}^{n+1}a_{i}q_{i}^{2}+\frac{k_{i}^{2}}{q_{i}^{2}}\right).\label{eq:rosochatius}
\end{equation}
We can generalize the procedure from the above to the case where potential
matrix has all the eigenvalues double $A=\mathrm{diag}(a_{1},a_{1},a_{2},a_{2},\ldots,a_{n+1},a_{n+1})$.
This gives a mechanical interpretation of Rosochatius system as Neumann
system on a rotating sphere. 

\begin{cor}
The symplectic quotient of the confluent Neumann system with potential
matrix $A=\mathrm{diag}(a_{1},a_{1},a_{2},a_{2},\ldots,a_{n+1},a_{n+1})$
by the group of rotations $(S^{1})^{n+1}$ is isomorphic to the quotient
by the group of reflections $\ZZ_{2}^{n+1}$ of the Rosochatius system
on $\tst^{n}$ given by the potential \eref{eq:rosochatius}. The
coefficients $k_{i}$ of the rational part are the values of the angular
momentum of rotations in corresponding eigenplanes of the potential
matrix $A$.
\end{cor}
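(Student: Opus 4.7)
The plan is to iterate the argument of the preceding theorem across all $n+1$ eigenplanes of $A$ simultaneously, using that rotations in disjoint coordinate pairs commute and that the associated moment maps Poisson-commute.

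First, I introduce product cylindrical coordinates on $\RR^{2(n+1)}$ by setting $q_{2i-1}=r_{i}\cos\varphi_{i}$ and $q_{2i}=r_{i}\sin\varphi_{i}$ for $i=1,\ldots,n+1$. The constraint $\norm{q}^{2}=1$ becomes $\sum_{i=1}^{n+1}r_{i}^{2}=1$, so $(r_{1},\ldots,r_{n+1})$ parametrises the closed positive orthant of the sphere $S^{n}\subset\RR^{n+1}$. The $i$-th factor of $(S^{1})^{n+1}$ rotates $\varphi_{i}$ with moment map component $K_{i}(q,p)=q_{2i-1}p_{2i}-q_{2i}p_{2i-1}$, i.e.\ the angular momentum of the $i$-th eigenplane.

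Second, on the regular stratum where every $r_{i}>0$ the joint action is free and cotangent-bundle reduction applies. Since each eigenplane contributes its own flat mechanical connection, there is no magnetic contribution and the reduced symplectic form on $M_{k}=\tst^{2n+1}//_{k}(S^{1})^{n+1}$ agrees with the canonical form on $T^{*}S_{+}^{n}$. Splitting the kinetic energy in each coordinate pair into a radial and an angular part yields
\[
H=\frac{1}{2}\sum_{i=1}^{n+1}\Bigl(p_{r_{i}}^{2}+\frac{k_{i}^{2}}{r_{i}^{2}}+a_{i}r_{i}^{2}\Bigr),
\]
which after relabelling $r_{i}\to q_{i}$, $p_{r_{i}}\to p_{i}$ is the Rosochatius Hamiltonian \eref{eq:rosochatius}, with each $k_{i}$ equal to the prescribed angular momentum value in the $i$-th eigenplane. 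I pass from the orthant back to the full sphere by identifying $(S_{+}^{n})^{c}$ with $S^{n}/\ZZ_{2}^{n+1}$ for the group generated by the sign changes $r_{i}\mapsto-r_{i}$; the Rosochatius Hamiltonian depends only on $r_{i}^{2}$ and $p_{r_{i}}^{2}$ and is therefore $\ZZ_{2}^{n+1}$-invariant.

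The main obstacle is the bookkeeping at the singular stratum where several $r_{i}$ vanish simultaneously and the $(S^{1})^{n+1}$-stabiliser is nontrivial. Because the action, the moment map, and the Hamiltonian all split as a direct sum along the pairwise disjoint eigenplanes, the local analysis reduces pair by pair to the singular description of $M_{0}$ carried out in the preceding theorem: at a boundary orthant point with vanishing coordinates $r_{i_{1}},\ldots,r_{i_{s}}$, the fibre becomes a product of closed half-planes of {}``positive differentials'', one per vanishing $r_{i_{j}}$, which matches exactly the $\ZZ_{2}^{n+1}$-quotient of the corresponding fibre of $\tst^{n}$.
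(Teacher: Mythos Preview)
Your proposal is correct and follows exactly the approach the paper indicates: the corollary is stated without a separate proof, the paper simply remarking that ``we can generalize the procedure from the above to the case where potential matrix has all the eigenvalues double,'' and your argument fills in precisely those details by iterating the single-eigenplane cylindrical-coordinate reduction across all $n+1$ eigenplanes simultaneously. The treatment of the singular stratum via the product structure of the action is the natural extension of the paper's $M_{0}$ description and is the expected way to handle it.
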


\subsection{Integrability}

Hamiltonian system with Hamiltonian $H$ on the symplectic manifold
of dimension $2n$ is called \emph{integrable }if there exist $n$
functionally independent pairwise Poisson commutative first integrals,
one of which is $H$. To show that the system is \emph{completely
algebraically integrable, }we will prove that the level sets of the
first integrals are real parts of the extensions of Abelian varieties
by $\CC^{*}$ (see \cite{Mumford:TataII:1984} for a definition of
complete algebraic integrability). 

The integrability of the confluent Neumann system is a consequence
of the integrability of the generic Neumann system, which is well
known\cite{Moser:CSY:1980,Adler:VanMoerbeke:AdvM:1980}. The first
integrals for the confluent case can be obtained by taking the limit
$a_{n}\to a_{n+1}$ on Uhlenbeck's integrals for generic Neumann system:
\begin{equation}
F_{i}^{g}=q_{i}^{2}+\sum_{j\not=i}\frac{(q_{i}p_{j}-q_{j}p_{i})^{2}}{a_{j}-a_{i}}.\label{eq:uhlebeck}
\end{equation}
A set of commuting integrals, equivalent to \eref{eq:uhlebeck},
tends to a set of commuting integrals for the confluent case 
\begin{eqnarray*}
\{F_{1}^{g},\ldots,F_{n-1}^{g},F_{n}^{g}+F_{n+1}^{g},\frac{1}{2}(a_{n}-a_{n+1})(F_{n}^{g}-F_{n+1}^{g})\}\to\\
\to\{F_{1},\ldots,F_{n-1},F_{n},K^{2}\}
\end{eqnarray*}
when taking the limit $a_{n}\to a_{n+1}$. The set of commuting integrals
for the confluent Neumann system is given by
\begin{eqnarray}
F_{i} & = & q_{i}^{2}+\sum_{j\not=i}^{}\frac{(q_{i}p_{j}-q_{j}p_{i})^{2}}{a_{j}-a_{i}};\quad i<n\nonumber \\
F_{n} & = & q_{n}^{2}+q_{n+1}^{2}+\sum_{j<n}\frac{(q_{j}p_{n}-q_{n}p_{j})^{2}+(q_{j}p_{n+1}-q_{n+1}p_{j})^{2}}{a_{j}-a_{n}}\label{eq:ul_int}\\
K^{2} & = & (q_{n}p_{n+1}-q_{n+1}p_{n})^{2}.\nonumber 
\end{eqnarray}
Note that $K$ is angular momentum for the rotations in $q_{n},q_{n+1}$
plane. We can also verify that the integrals in \eref{eq:ul_int}
are not independent but satisfy the same relations as Uhlenbeck's
integrals in the generic case
\[
\sum_{i=1}^{n}F_{i}=1
\]
and that the Hamiltonian $H$ can be expressed as a linear combination
of $F_{i}$ and $K^{2}$ 
\[
\sum_{i=1}^{n}a_{i}F_{i}+K^{2}=2H.
\]
The Poisson brackets of $F_{i}$ are continuous functions of $a_{i}$
so the commutativity of the integrals is preserved when taking the
limit $a_{n}\to a_{n+1}$. Commutativity of $K$ with $F_{i}$ also
follows from the fact that $F_{i}$ are invariant for $S^{1}$ action,
generated by $K$. To conclude the proof of integrability one needs
to verify the independence of the integrals \eref{eq:ul_int} (up
to relation $\sum F_{i}=1$). The commutativity of the integrals $F_{i}$
and $K^{2}$ also follows from the AKS theorem, if we write $F_{i}$
an $K^{2}$ as invariant functions of Lax matrix on appropriate loop
algebra. This is a standard way to prove commutativity of Uhlenbeck's
integrals for the generic Neumann case \cite{Adams:Harnad:CMP:1988,Beauville:Acta:1990}.

Let us combine all the first integrals in a map 
\begin{eqnarray*}
F_{EM}: & \tst^{n}\to\RR^{n}\\
 & (q,p)\mapsto(F_{1},\ldots,F_{n-1},K)
\end{eqnarray*}
we will call \emph{energy momentum map}. The fundamental property
of integrable systems is that the level sets of energy momentum map
$F_{EM}$ are n-dimensional tori, on which the flow can be linearized.
We will also use the complexified version of $F_{EM}$, defined on
$(\tst^{n})^{\CC}$. 

\begin{thm}
The confluent Neumann system $(\tst^{n},\omega_{c},H)$ is algebraically
completely integrable system.
\end{thm}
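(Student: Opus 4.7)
The statement packages two assertions: Liouville integrability (commuting independent integrals $F_{1},\dots,F_{n-1},K$) and the algebraic upgrade that complex level sets are Zariski open in extensions of abelian varieties by $\CC^{*}$. Commutativity has been established above, so the plan splits naturally into verifying functional independence and then exhibiting the algebraic structure of the level sets.

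For independence I would pass to the limit from the generic Neumann system. The change of basis $\{F_{1}^{g},\dots,F_{n-1}^{g},\,F_{n}^{g}+F_{n+1}^{g},\,\tfrac{1}{2}(a_{n}-a_{n+1})(F_{n}^{g}-F_{n+1}^{g})\}$ is nondegenerate for $a_{n}\neq a_{n+1}$, and Uhlenbeck's integrals are known to be independent modulo $\sum F_{i}^{g}=1$ on a dense open subset of $\tst^{n}$. Evaluating the Jacobian at a point with $q_{n},q_{n+1},p_{n},p_{n+1}$ chosen generically and taking $a_{n}\to a_{n+1}$ gives linear independence of $dF_{1},\dots,dF_{n},d(K^{2})$ modulo $d(\sum F_{i})=0$; since $d(K^{2})=2K\,dK$, this in turn yields independence of $F_{1},\dots,F_{n-1},K$ on the open set where $K\neq 0$, which is enough because $K=0$ is the lower-dimensional fixed-point stratum of the $S^{1}$ action and can be handled by continuity.

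For the algebraic part I would defer to the Lax machinery developed in section~\ref{sec:Lax-representation-of}. The $(n+1)\times(n+1)$ Lax representation realizes the $F_{i}$ and $K^{2}$ as coefficients of the characteristic polynomial of a matrix polynomial $L(\lambda)$, so a complex level set $F_{EM}^{-1}(c)^{\CC}$ embeds into the isospectral variety of matrix polynomials with a fixed spectral curve $C$. The coincidence $a_{n}=a_{n+1}$ forces $C$ to acquire a node, and the standard eigenvector-line construction identifies the gauge quotient of the isospectral variety with a Zariski open subset of the compactified Jacobian $\bar J(C)$. Because $C$ is irreducible with a single node, $\bar J(C)$ is the generalized Jacobian sitting in an extension
\[
1\longrightarrow\CC^{*}\longrightarrow J(C)\longrightarrow J(\tilde C)\longrightarrow 1,
\]
with the $\CC^{*}$ factor being the algebraic incarnation of the rotational symmetry generated by $K$. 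The real Arnold-Liouville tori are then recovered as the real part of this extension; this identification is the content of theorem~\ref{thm:Arnold-Jacobian}, of which the present theorem is an immediate corollary.

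The principal obstacle is the spectral-curve analysis in the singular case: one must check that the eigenline map extends across the node, that its image is Zariski open of the correct dimension in $\bar J(C)$, and that the $\CC^{*}$ factor of the extension matches the Hamiltonian $S^{1}$-action generated by $K$. The Liouville half is comparatively routine once the limit construction is accepted.
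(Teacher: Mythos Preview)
Your overall architecture---defer the algebraic integrability to the Lax analysis culminating in theorem~\ref{thm:Arnold-Jacobian}---is exactly what the paper does; the paragraph following the theorem statement is nothing more than a forward reference to subsection~\ref{ssec:Linearisation-of-the}. Your description of the mechanism (singular spectral curve, eigenline map, generalized Jacobian as a $\CC^{*}$-extension of $Jac(\C)$, with the $\CC^{*}$ matching the $S^{1}$-action of $K$) is substantively correct, though the paper speaks of the \emph{generalized} Jacobian rather than the compactified one; on the Zariski open subset in question these coincide, so the discrepancy is only terminological.

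Where you diverge from the paper is in the argument for functional independence. You propose to pass to the limit $a_{n}\to a_{n+1}$ in the Jacobian matrix of the generic Uhlenbeck integrals and read off independence of the limiting differentials. As written this has a gap: rank is only lower semicontinuous, so full rank for $a_{n}\neq a_{n+1}$ does not automatically survive the limit. To make the argument work you would need either an explicit check at a well-chosen point of the confluent system, or a reason why some specific $n\times n$ minor, viewed as a polynomial in the parameters, is not divisible by $(a_{n}-a_{n+1})$. The paper sidesteps this entirely: once theorem~\ref{thm:Arnold-Jacobian} exhibits the complex level set $F_{EM}^{-1}(f)$ as a finite cover of $Jac(\C_{m})-\Theta$, which has complex dimension $n$, the rank of $dF_{EM}$ is forced to be $n$ and independence follows. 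So in the paper's logic independence is a \emph{consequence} of the algebraic structure, not a separate input, and your limit argument is unnecessary (and, in its current form, incomplete).
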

We have seen that by taking the limit $a_{n}\to a_{n+1}$, $\{F_{1},\ldots,F_{n-1},K^{2}\}$
is a set of $n$ commuting first integrals for symmetric Neumann system
and we will show later in \ref{ssec:Linearisation-of-the} that they
are functionally independent. We will also prove the part about algebraically
complete integrability in subsection \ref{ssec:Linearisation-of-the}.

From the fact that the symmetric Neumann system is integrable also
follows that its symplectic quotient is integrable. This gives alternative
proof of the integrability of Rosochatius system, which is a symplectic
quotient of the symmetric Neumann system. 

\begin{cor}
The Rosochatius system $(\tst^{n},\omega_{c},H+V_{r})$ with $V_{r}=\sum\frac{ki^{2}}{q_{i}^{2}}$
is algebraically completely integrable system.
\end{cor}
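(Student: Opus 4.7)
The plan is to deduce this corollary directly from the algebraic integrability of the confluent Neumann system (the preceding theorem) together with the reduction-theoretic identification of the Rosochatius system (the preceding corollary). The overarching principle is that symplectic reduction of an algebraically completely integrable system by a Hamiltonian group action whose moment map components are themselves among the first integrals produces an algebraically completely integrable system on the reduced space.

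Concretely, I would first set up the doubled confluent Neumann system $(\tst^{2n+1}, \omega_c, H_A)$ with $A=\mathrm{diag}(a_1,a_1,\ldots,a_{n+1},a_{n+1})$ and identify the $(n+1)$-torus of rotations $(S^1)^{n+1}$, one $S^1$ in each eigenplane of $A$, with moment map $K=(K_1,\ldots,K_{n+1})$. By the preceding theorem this system admits $2n+1$ Poisson-commuting, functionally independent first integrals; by exactly the same limiting argument as in \eref{eq:ul_int} applied plane-by-plane, a natural choice of integrals is $\{F_1,\ldots,F_{n+1},K_1^2,\ldots,K_{n+1}^2\}$ subject to the single linear relation $\sum F_i = 1$, all of them $(S^1)^{n+1}$-invariant. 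Fixing a regular level $K=(k_1,\ldots,k_{n+1})$ and passing to the symplectic quotient, the $K_i^2$ become the Casimir constants $k_i^2$ and the remaining $F_i$ descend to $n$ independent Poisson-commuting functions on the $2n$-dimensional reduced space. By the preceding corollary the latter is (a $\ZZ_2^{n+1}$ cover of) the Rosochatius system, and one checks by direct substitution in cylindrical coordinates $q_{2i-1}+iq_{2i}=r_i e^{i\varphi_i}$ that the reduced $F_i$ are exactly the standard Rosochatius integrals attached to \eref{eq:rosochatius} with parameters $k_i$. This proves Liouville integrability. The finite $\ZZ_2^{n+1}$ cover in the preceding corollary is irrelevant at this step since integrability is preserved under finite equivariant lifts.

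For the algebraic part I would argue on the complexified energy-momentum fibres. By algebraic complete integrability of the confluent Neumann system, a generic complex common level set of $(F_1,\ldots,F_{n+1},K_1,\ldots,K_{n+1})$ is (an open subset of) the real part of an extension of an abelian variety by $(\CC^*)^{n+1}$, the $(\CC^*)^{n+1}$-factor being precisely the complexified rotational symmetry that produces the moment map components $K_i$ (this is the linearization picture on the generalized Jacobian advertised in the introduction and made precise later in the paper through theorem~\ref{thm:Arnold-Jacobian}). Quotienting such a complex level set by the $(\CC^*)^{n+1}$ action corresponding to the complexified rotations kills exactly the non-compact factor of the extension, leaving a genuine abelian variety on which the Rosochatius Hamiltonian flow is linear; passing to the real part yields the Arnold--Liouville tori of the Rosochatius system, establishing algebraic complete integrability in the sense recalled before \eref{eq:uhlebeck}.

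The main obstacle I expect is the last step: verifying cleanly that the $(\CC^*)^{n+1}$ factor of the extension is \emph{exactly} the complexified rotational symmetry, so that the quotient is an abelian variety rather than another extension, and that the Rosochatius flow on the quotient is indeed the image of the confluent Neumann flow under this quotient map. This is essentially the compatibility between symplectic reduction and the algebraic reduction from the generalized Jacobian to its compact part, i.e.\ the point emphasized in the introduction as a novel feature of the paper; a rigorous verification requires the Lax-pair and spectral-curve machinery developed in section~\ref{sec:Lax-representation-of} and so is, strictly speaking, logically posterior to this corollary. The cleanest path is therefore to state the corollary here as a consequence of the general principle above, and regard its full algebraic-geometric justification as completed once theorem~\ref{thm:Arnold-Jacobian} is in hand.
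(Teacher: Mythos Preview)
Your proposal is correct and follows essentially the same route as the paper: the corollary is not given an independent proof there either, but is deduced from the algebraic complete integrability of the confluent Neumann system together with the identification of the Rosochatius system as its symplectic quotient (the preceding corollary), with the algebraic-geometric justification deferred to the Lax/spectral-curve analysis of section~\ref{sec:Lax-representation-of}. Your explicit acknowledgment that the compatibility between symplectic reduction and the passage from the generalized Jacobian to its abelian-variety quotient is only established via theorem~\ref{thm:Arnold-Jacobian} (and its consequence, corollary~\ref{cor:reduced-jac(C)}) is exactly how the paper handles the logical dependency.
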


\section{Lax representation of the confluent Neumann system\label{sec:Lax-representation-of}}

In this section we will use Lax equation for $(n+1)\times(n+1)$ matrix
polynomials to study confluent Neumann system with $n$ degrees of
freedom. We will show that the flow of the system can be linearized
on the Jacobian of the singular spectral curve and that the system
is completely algebraically integrable. This will also conclude the
proof of Liouville integrability and yield the description of bifurcation
diagram for energy momentum map.

\subsection{Lax equation in $\lgl(n+1,\CC)$}

Let us write the confluent Neumann system \eref{eq:hamiltonian-system}
as an isospectral flow of matrix polynomials in loop algebra $\lgl(n+1,\CC)$.
We will use the Lax equation introduced by Moser \cite{Moser:CSY:1980}.
The \emph{loop algebra} $\lgl(n+1,\CC)$ consists of Laurent polynomials
with coefficients in $\mathfrak{gl}(n+1,\CC)$, and can be written
as a tensor product $\mathfrak{gl}(n+1,\CC)\otimes\CC[\lambda,\lambda^{-1}]$.
Consider complexified Neumann system \eref{eq:hamiltonian-system}
on a subspace $(\tst^{n})^{\CC}\subset\CC^{n+1}\times\CC^{n+1}$ defined
by constraints $\sum q_{i}^{2}=1$ and $\sum q_{i}p_{i}=0$. We introduce
\emph{Lax matrix polynomial} from $\lgl(n+1,\CC)$
\begin{equation}
L(\lambda)=A\lambda^{2}+q\wedge p\ \lambda-q\otimes q\ ;\quad q,p\in\CC^{n+1}\label{eq:lax_n+1}
\end{equation}
where $A$ is the potential matrix of the Neumann system and $(q,p)\in(\tst^{n})^{\CC}$.
Neumann system either generic or confluent can be written as \emph{Lax
equation} 
\begin{equation}
\frac{\rmd}{\rmd t}L(\lambda)=[M(\lambda),L(\lambda)].\label{eq:Lax}
\end{equation}
with $M(\lambda)=A\lambda+q\wedge p=(\lambda L(\lambda))_{+}$ (the
subscript $_{+}$ denotes the polinomial part of an element of $\lgl(n+1,\CC)$)\cite{Adler:AdvM:1980}.
The flow of \eref{eq:Lax} is isospectral as it conserves the spectrum
of the matrix $L(\lambda)$. This means that the characteristic polynomial
$P(\lambda,\mu)$ and the corresponding affine spectral curve, defined
by equation 
\[
P(\lambda,\mu)=\det(L(\lambda)-\mu)=0
\]
do not change along the flow and can be expressed only with the values
of the first integrals of Neumann system. In order to avoid unnecessary
singularities at the infinity, the affine curve is completed in the
total space of the line bundle $\OO_{\PP^{1}}(2)$ over $\PP^{1}$,
which is given by the transition function $(\lambda,\mu)\mapsto(\lambda^{-1},\lambda^{-2}\mu)$
. The completion of the affine spectral curve in $\OO_{\PP^{1}}(2)$
is called \emph{the spectral curve }of $L(\lambda)$ and denoted by
$\C_{m}$. Note that for any given $L(\lambda)$ there are also naturally
defined a map $\lambda:\C_{m}\to\PP^{1}$ and a section $\mu$ of
the bundle $\lambda^{*}\OO_{\PP^{1}}(2)$ apart from the spectral
curve $\C_{m}$.

The spectral curve of $L(\lambda)$ is hyperelliptic and we can see
that by introducing new variables $x=\lambda^{-2}\mu$ and $y=\lambda\prod_{i=1}^{n+1}(a_{i}-x)$,
where $a_{i}$ are the eigenvalues of the potential matrix $A.$ The
equation we obtain is:
\begin{equation}
y^{2}=Q(x)\prod_{i=1}^{n+1}(a_{i}-x),\label{eq:sing_hyper_eq}
\end{equation}
where $Q(x)$ is a polynomial of degree $n$ \cite{Moser:CSY:1980,Audin:ExM:1994}.
The coefficients of $Q(x)$ are first integrals of the confluent Neumann
system and if we write $Q(x)$ in terms of Lagrange interpolating
polynomials over the points $a_{1},\ldots,a_{n},a_{n}$ using notation
with partial fractions 
\begin{equation}
Q(x)=\prod_{j=1}^{n+1}(a_{j}-x)\left(\frac{F_{n}}{a_{n}-x}+\frac{K^{2}}{(a_{n}-x)^{2}}+\sum_{i=1}^{n-1}\frac{F_{i}}{a_{i}-x}\right),\label{eq:qx}
\end{equation}
the coefficients we obtain are the integrals \eref{eq:ul_int} we
met before . 

In the confluent case, where $a_{n}=a_{n+1}$, the product $\prod(a_{i}-x)$
has a quadratic factor $(a_{n}-x)^{2}$ and the spectral curve has
a singular point $S$ given by $(x,y)=(a_{n},0)$ or $(\lambda,\mu)=(\infty,a_{n})$.
The singularity $S$ is a double point for $K\not=0$ and a cusp for
$K=0$. The smoothness of the spectral curve is closely related to
the regularity of the matrix $L(\lambda)$. Recall that a matrix $B\in\mathfrak{gl}(r,\CC)$
is called \emph{regular} if all the eigenspaces of $B$ are one dimensional.

\begin{prop}
\label{pro:smooth->regular}Let $\C$ be the spectral curve of matrix
polynomial $L(\lambda)$ and $a\in\PP^{1}$. If all the points $\lambda^{-1}(a)\in\C$
are smooth, then the matrix $L(a)$ is regular.
\end{prop}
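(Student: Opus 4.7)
The plan is to prove the contrapositive: if $L(a)$ fails to be regular, then some point of $\lambda^{-1}(a)$ is singular on $\C$. So suppose $L(a)$ has an eigenvalue $\mu_{0}$ whose geometric multiplicity is at least $2$, i.e.\ the eigenspace of $\mu_{0}$ has dimension $\geq 2$. Then the matrix $B := L(a)-\mu_{0}I\in\mathfrak{gl}(n+1,\CC)$ has corank at least $2$, so every $n\times n$ minor of $B$ vanishes. In other words the classical adjugate $\mathrm{adj}(B)$ is the zero matrix.

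Next I would compute the two partial derivatives of $P(\lambda,\mu)=\det(L(\lambda)-\mu I)$ at $(a,\mu_{0})$ using Jacobi's formula for the derivative of a determinant:
\[
\partial_{\lambda}P(a,\mu_{0}) \;=\; \mathrm{tr}\bigl(\mathrm{adj}(B)\,L'(a)\bigr),\qquad \partial_{\mu}P(a,\mu_{0}) \;=\; -\,\mathrm{tr}\bigl(\mathrm{adj}(B)\bigr).
\]
Both vanish because $\mathrm{adj}(B)=0$. Hence $(a,\mu_{0})$ is a singular point of the affine spectral curve, contradicting the smoothness hypothesis. Conversely, this also makes clear why the converse direction (regular $\Rightarrow$ smooth) is false: one only needs geometric multiplicity one, while $\partial_\mu P$ still vanishes whenever $\mu_0$ has algebraic multiplicity $\geq 2$.

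The main subtlety is the point at infinity: $a$ may equal $\infty\in\PP^{1}$, and $\C$ has been completed in the total space of $\OO_{\PP^{1}}(2)$ with transition function $(\lambda,\mu)\mapsto(\lambda^{-1},\lambda^{-2}\mu)$, so one should verify that the argument is intrinsic to the curve and not to the chosen chart. I would handle this by rewriting the Lax polynomial in the affine coordinates $(\lambda',\mu')=(\lambda^{-1},\lambda^{-2}\mu)$, where the defining equation of $\C$ near $\lambda=\infty$ becomes $\det(A - \mu' I + (q\wedge p)\lambda' - (q\otimes q)(\lambda')^{2})=0$; again a matrix polynomial in $\lambda'$, so exactly the same adjugate computation applies, with $L(a)$ replaced by its leading coefficient $A$ when $a=\infty$. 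Since $A$ is diagonal with distinct or at most double eigenvalues, one can read off directly that regularity of this leading matrix is reflected in smoothness of the two points lying over $\infty$.

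Apart from the infinity case, the only real ingredient is the rank/corank characterization of the adjugate matrix plus Jacobi's formula, both of which are standard linear algebra, so I do not anticipate a hard step.
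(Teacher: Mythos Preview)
Your argument is correct and is essentially the standard proof; note that the paper itself does not give a proof but defers to \cite{Beauville:Acta:1990,Gavrilov:CRMProc:2002}, where precisely this adjugate/Jacobi computation appears. The contrapositive you wrote --- geometric multiplicity $\ge 2$ forces $\mathrm{adj}(L(a)-\mu_0 I)=0$, hence both partials of $P$ vanish --- is the right mechanism, and your handling of $a=\infty$ via the chart $(\lambda',\mu')=(\lambda^{-1},\lambda^{-2}\mu)$ is exactly what is needed.

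One small slip in your infinity discussion: for the $(n{+}1)\times(n{+}1)$ Lax matrix there are $n{+}1$ points over $\lambda=\infty$ (one for each eigenvalue of $A$, counted with multiplicity), not two; the curve is an $(n{+}1)$-sheeted cover of $\PP^1$ in the $\lambda$-projection, and it is only after the change of variables to $(x,y)$ that one sees it as hyperelliptic. This does not affect your argument --- the adjugate computation applies verbatim to the leading coefficient $A$ --- but the phrase ``the two points lying over $\infty$'' should be corrected.
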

For proof see \cite{Beauville:Acta:1990,Gavrilov:CRMProc:2002}. The
value of $L(\lambda)\lambda^{-2}$ at $\lambda=\infty$ is the matrix
$A$ and the singularity at $S$ is a consequence of the fact that
$A$ is not regular when $a_{n}=a_{n+1}$ . 

Let $\C$ be normalization of the spectral curve $\C_{m}$, which
is described by\begin{equation}
w^{2}=Q(x)\prod_{i=1}^{n-1}(a_{i}-x),\label{eq:norm_hyper_eq}\end{equation}
where $w=y/(a_{n}-x)$. We will call $\C$ the \emph{normalized spectral
curve }of $L(\lambda)$. There is a map $\pi:\C\to\C_{m}$ that is
biholomorphic everywhere except at the inverse image of the singular
point $S$. The inverse image $\pi^{-1}(S)$ consists of two points
$\{P_{+},P_{-}\}$ for $K\not=0$ and a point $P_{0}$ for $K=0$.
In case when $K\not=0$, the curve $\C_{m}$ is obtained from $\C$
by identifying the points $\{P_{+},P_{-}\}$ into the singular point
$S$. The singular curve $\C_{m}$ can be described as a singularization
of $\C$ given by \emph{modulus} $m$ (see \cite{Serre:Springer:ClassFields}
for details). The modulus is $m=P_{+}+P_{-}$ for $K\not=0$ and $m=2P_{0}$
for $K=0$.

\begin{rem}
Note that we will only resolve the singularity at $S,$ which is {}``generic''
in the sense that it appears for all the values of the energy-momentum
map. So the curve $\C$ can still be singular for some values of the
energy-momentum map.$ $
\end{rem}
Finally we find the genus of $\C$ and $\C_{m}$ from the fact that
the curves are hyperelliptic and from the degree of the polynomials
in \eref{eq:sing_hyper_eq} and \eref{eq:norm_hyper_eq}. We obtain
$g(\C)=n-1$ for normalized spectral curve $\C$ and the arithmetic
genus $g_{a}(\C_{m})=n$ for the singular curve $\C_{m}$.

\subsection{Isospectral manifold of matrix polynomials}

We have seen that the Neumann system satisfies Lax equation and that
the spectral curve depends only on the first integrals and that all
the first integrals are encoded in the spectral curve. The level sets
of (complexified) energy momentum map $F_{EM}$ lie in the set of
matrices $L(\lambda)$ with fixed spectral curve $\C_{m}$. It is
therefore essential to describe the set of matrix polynomials with
a given spectral curve. 

Let $P(\lambda,\mu)$ be \emph{a spectral polynomial} for some matrix
polynomial $L(\lambda)\in\lgl(r,\CC)$). Denote with $\M_{P}$ the
subset of all the elements of $\lgl(r,\CC)$ with the same characteristic
polynomial $P$ (this also fixes the spectral curve $\C$)
\[
\M_{P}=\{L(\lambda);\quad\det(L(\lambda)-\mu)=P(\lambda,\mu)\}.
\]
All $L(\lambda)\in\M_{P}$ have the same spectral curve $\C$, which
is defined by $P(\lambda,\mu)=0$. 

While the characteristic polynomial and thus the eigenvalues of $L(\lambda)$
are fixed by the flow, the eigenvectors and eigenspaces change. Let
us define a map 
\[
\xi_{L(\lambda)}:\C\to\mathbb{P}^{r-1},
\]
such that $\xi_{L(\lambda)}((\lambda,\mu))$ is one dimensional eigenspace
of the matrix $L(\lambda)$ with respect to the eigenvalue $\mu$.
If the spectral curve is smooth, than by proposition \ref{pro:smooth->regular}
all the eigenspaces of $L(\lambda)$ for any $\lambda$ are onedimensional
and the map $\xi$ is well defined. The map $\xi_{L(\lambda)}$ defines
a line bundle on $\C$ and its dual is called \emph{eigenvector line
bundle }or shorter \emph{eigenline bundle. }We will denote eigenline
bundle by $\LL_{L(\lambda)}$. By construction, the eigenline bundle
is a subbundle of the trivial bundle $\C\times\CC^{r}$. One can see
by using Riemann-Roch-Grothendick theorem that the degree (Chern class)
$d$ of the eigenline bundle $\LL_{L(x)}$ equals $g+r-1$ where $g$
is the genus of the spectral curve $\C$.

The only condition for $\xi_{L(\lambda)}$ to be defined is that the
eigenspaces of $L(\lambda)$ are one dimensional for all but finite
number of points on $\C$. If the spectral curve is singular, than
the map $\xi_{L(\lambda)}$ can be defined on the set of points $\C_{m}-N\subset\C_{m}$
where the matrix $L(\lambda)$ has one dimensional eigenspace. If
the set $N$ is finite, the map $\xi_{L(\lambda)}$ can be extended
as a holomorphic map and eigenline bundle $\LL_{L(x)}$ can be defined
on the normalization $\C$ of the spectral curve. Note that by the
proposition \ref{pro:smooth->regular} the set $N$ is a subset of
singular locus of the spectral curve $\C_{m}$. 

One can define the \emph{eigenbundle map} \begin{eqnarray*}
e: & \M_{P}\to Pic^{d}(\C)\\
 & L(\lambda)\mapsto[\LL_{L(\lambda)}],\end{eqnarray*}
from $\M_{P}$ to the Picard group $Pic(\C)$ of isomorphism classes
of line bundles on the normalized spectral curve $\C$. The subset
$Pic^{d}(\C)$ consists of classes of line bundles with given Chern
class $d$. The set $Pic^{d}(\C)$ is a copy of the zero degree Picard
subgroup \texttt{$Pic^{0}(\C)$,} which is in turn isomorphic via
Abel-Jacobi map to the Jacobian $Jac(\C)$ of $\C$. The map $e$
assigns to each matrix polynomial $L(\lambda)$ the isomorphism class
of its\emph{ }eigenline bundle\emph{ }and thus encodes the flow of
Lax equation. The map $e$ is not surjective since eigenline bundles
cannot lie in the special divisor $\Theta$ on the Jacobian. The map
$e$ is neither injective since the class of eigenline bundle defines
the matrix polynomial only up to conjugation by the gauge group $\PP Gl(r,\CC)$.
The space we have to consider is the quotient space $\M_{P}/\PP Gl(r,\CC)$
and it was shown in \cite{Beauville:Acta:1990} that if the spectral
curve $\C$ is smooth, the space $\M_{P}/\PP Gl(r,\CC)$ is isomorphic
as an algebraic manifold to the Zariski open subset $Jac(\C)-\Theta$
of the Jacobian of the spectral curve $\C$. The isomorphism is given
by the eigenbundle map $e$.

As the leading coefficient in $L(\lambda)$ is preserved by the flow,
we will use the closed subset of $\M_{P}$ of matrix polinomials with
fixed leading coefficient. Let $L(\lambda)=A\lambda^{l}+A_{l-1}\lambda^{l-1}+\ldots+A_{0}$
and let $A$ be fixed. We denote 
\[
\M_{P}^{A}=\{L(\lambda)\in\M_{P};\quad\lim_{\lambda\to\infty}L(\lambda)/\lambda^{l}=A\}.
\]
The action of the gauge group $\PP Gl(r,\CC)$ on $\M_{P}$ reduces
to the action of the stabilizer subgroup $\PG_{A}<\PP Gl(r,\CC)$
of $A$. The quotient $\M_{P}^{A}$ by $\PG_{A}$ is again isomorphic
to the Jacobian
\[
\M_{P}^{A}/\PG_{A}\simeq Jac(\C)-\Theta.
\]

\begin{rem}
If the spectral curve $\C$ is smooth at infinity, the matrix $A$
has to be \emph{regular }by proposition \ref{pro:smooth->regular}\emph{.}
As a consequence, the stabilizer group of $A$ is the product $\PG_{A}=(\CC^{*})^{s-1}\times\CC^{r-s}$
where $s$ is the number of distinct eigenvalues of $A$. If $A$
is not regular, the dimension of the stabilizer group $\PG_{A}$ is
larger. In our case, when $A=\mathrm{diag}(a_{1},\ldots,a_{n},a_{n})$
the group $\PG_{A}$ is the product 
\[
(\CC^{*})^{n-2}\times Gl(2,\CC).
\]
\end{rem}
Let us assume for one moment that the matrix $A$ is regular. We have
seen before that the level set $\M_{P}^{A}$ is an extension of $Jac(\C)-\Theta$
by an Abelian algebraic group $\PG_{A}$. As it happens, the generalized
Jacobian is also defined as an extension by Abelian algebraic group
and it was shown in \cite{Gavrilov:CRMProc:2002} that 
\[
\M_{P}^{A}\simeq Jac(\C_{m})-\Theta
\]
for a suitable choice of modulus $m$. The chosen modulus is the effective
divisor consisting of infinite points $P_{i}$ on the spectral curve
\[
m=\sum_{P_{i}\in\lambda^{-1}(\infty)}m_{i}P_{i},
\]
where the coefficients $m_{i}$ are multiplicities of the eigenvalues
$\mu(P_{i})$ of $A=L(\infty)$. The isomorphism $\M_{P}^{A}\to Jac(\C_{m})-\Theta$
is given by the eigenbundle map $e_{m}$ to the generalized Jacobian.
Let $S=|m|$ be the set of points in $\C$ that are mapped to the
singular point in $\C_{m}$. The line bundle on the singular curve
$\C_{m}$ is given by a divisor on $\C$ that does not intersect the
singular set $S$. To give a line bundle on $\C_{m}$ is thus enough
to give a section of a line bundle on $\C$ that has no zeros or poles
in $S$. We can then extend the map $e:\M_{P}\to Pic^{d}(\C)$ to
a map 
\[
e_{m}:\M_{P}\to Pic^{d}(\C_{m})
\]
by choosing a section of $e(L(\lambda))$ uniformly on $\M_{P}$ that
does not have any zeros or poles in the infinite point. This can be
done by appropriate normalization. It was shown in \cite{Gavrilov:CRMProc:2002,Vivolo:EdMathSoc:2000}
that $e_{m}$ gives an isomorphism from the isospectral space $\M_{P}^{A}$
to the Zariski open subset $Jac(\C_{m})-\Theta$ of the generalized
Jacobian of the singular spectral curve, given by modulus $m$.

\begin{rem}
The above results give orientation about the expected number of degrees
of freedom of isospectral flows. We see that the upper limit is the
arithmetic genus\emph{ }of the singularization of the spectral curve,
which depends on the degree $l$ and the rank $r$ of $L(\lambda)$. 
\end{rem}
In the case of confluent Neumann system, the matrix $A$ is not regular
and corresponding spectral curve is singular. We will show later that
the Lax flow also preserves part of the {}``derivative'' of $L(\lambda)$
at the singular point. We will restrict the space $\M_{P}^{A}$ further
by fixing a specific block of the lower term $A_{l-1}$. The resulting
isospectral set will again be isomorphic to the open subset of the
generalized Jacobian \cite{Vivolo:EdMathSoc:2000}. Note that we have
to assume regularity of the fixed block of the lower term $A_{l-1}$
in order to have the isomorphism.

\subsection{Proof of the integrability \label{ssec:Linearisation-of-the}}

In order to prove the Liouville integrability of the confluent Neumann
system, we have to show that its first integrals $\{F_{1},\ldots,F_{n-1},K\}$
Poisson commute and that they are functionally independent. We already
know that the integrals commute from section \ref{sec:Symmetric-Neumann-system}.
We also know that the level sets of the first integrals - the Arnold-Liouville
tori - lie in the isospectral manifold $\M_{P}^{A}$. We will use
the eigenbundle map $e_{m}$ to map Arnold-Liouville tori to the real
part of the generalized Jacobian of the singular spectral curve $\C_{m}$
and show that this map is a $(\ZZ_{2})^{n-1}$ covering. This will
prove complete algebraic integrability of confluent Neumann system
and independence of the first integrals will follow. 

Let $E_{n}$ be the eigenspace of the double eigenvalue $a_{n}$,
which is spanned by the unit vectors $e_{n}$ and $e_{n+1}$. The
behavior of $L(\lambda)|_{E_{n}}$ near infinity is given by the
$n,n+1$ block $F$ of the matrix $q\wedge p$, which is in fact conserved
by the isospectral flow. The block $F$ depends only on the angular
momentum $K=q_{n}p_{n+1}-q_{n+1}p_{n}$ and is a regular matrix 
\[
F=\left(\begin{array}{cc}
0 & K\\
-K & 0\end{array}\right)
\]
 of rank $2$ if $K\not=0$. The isospectral flow given by \eref{eq:Lax}
conserves both matrix $A$ and the block $F$, therefore it is reasonable
to consider isospectral manifold of all matrix polynomials with this
data fixed 
\[
\M_{P}^{A,F}=\{L(\lambda)\in\M_{P};\quad L(\infty)=A,\ \mathrm{pr}_{E_{n}}\circ L'(\infty)|_{E_{n}}=F\}
\]
 where $\mathrm{pr}_{E_{n}}$is a projection to $E_{n},$ and the
values of $L(\lambda)$ and $L'(\lambda)$ at infinity are defined
by the limits $L(\infty):=\lim_{\lambda\to\infty}(\lambda^{-2}L(\lambda))$
and $L'(\infty):=\lim_{\lambda\to\infty}\rmd(\lambda^{-2}L(\lambda))/\rmd(\lambda^{-1})$.
On $\M_{P}^{A,F}$ acts the subgroup $\PG_{A,F}<\PG_{A}$ of matrices
that stabilize $F$ as well. We can write 
\[
\PG_{A,F}\simeq(\CC^{*})^{n-1}\times G_{F}
\]
where $G_{F}$ is the stabilizer subgroup of $F$ in $Gl(2,\CC)$.
The group $G_{F}\simeq\CC^{*}$ consists of matrices 
\[
r\left(\begin{array}{cc}
\cos\varphi & \sin\varphi\\
-\sin\varphi & \cos\varphi\end{array}\right),
\]
where $r\in(0,\infty)$ and $\varphi\in[0,2\pi)$. We will describe
the isospectral manifold $\M_{P}^{A,F}$ as an open subset of the
generalized Jacobian of singularized spectral curve.

\begin{thm}
\label{thm:Arnold-Jacobian}Let $f\in\CC^{n}$ be the value of $F_{EM}$
such that the normalized spectral curve $\C$ is smooth and $K\not=0$.
Let denote by $\TT_{A}=(\CC^{*})^{n-1}$ the subgroup of $\PG_{A}$
of diagonal matrices $G=[g_{i,j}]$ with $g_{n,n}=g_{n+1,n+1}$. 
\begin{enumerate}
\item The complex level set $F_{EM}^{-1}(f)$ is a covering of the isospectral
manifold $\M_{P}^{A,F}/\TT_{A}$. The fiber of the covering is the
same as the orbit of the group $(\ZZ_{2})^{n-1}$ generated by reflections
$q_{i}\mapsto-q_{i}$ for $1\le i\le n-1$.
\item The isospectral manifold $\M_{P}^{A,F}/\TT_{A}$ is isomorphic to
the open subset of the generalized Jacobian of the singular spectral
curve $\C_{m'}$ given as a singularization of smooth spectral curve
by modulus $m=(\infty,\rmi K)+(\infty,-\rmi K)$. 
\item \label{enu:flow-K}The flow of $K$ generates the fiber $\CC^{*}$
of the extension $\CC^{*}\to Jac(\C_{m})\to Jac(\C)$.
\item \label{enu:flow-H}The flow of $H$ on the generalized Jacobian $Jac(\C_{m})$
is linear.
\end{enumerate}
\end{thm}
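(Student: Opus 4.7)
The plan is to factor the complex level set through the chain $F_{EM}^{-1}(f) \stackrel{\Phi}{\to} \M_P^{A,F} \stackrel{\pi}{\to} \M_P^{A,F}/\TT_A \stackrel{e_m}{\to} Jac(\C_{m'})-\Theta$, where $\Phi$ is the Lax map, $\pi$ is the quotient, and $e_m$ is the eigenbundle map. I would then prove the four assertions one at a time by analyzing these arrows. For (i), first observe that $\Phi(q,p)=L(\lambda)$ lands in $\M_P^{A,F}$ because $L(\infty)=A$ and the $(n,n+1)$-block of $L'(\infty)=q\wedge p$ equals $F$. To identify the fiber of $\pi\circ\Phi$, I would ask when two Neumann-form Lax matrices are $\TT_A$-conjugate: requiring $g(q\otimes q)g^{-1}=q'\otimes q'$ for $g$ diagonal forces $g_i=\epsilon_i\gamma$ with $\epsilon_i\in\{\pm 1\}$, and the $\TT_A$-constraint $g_n=g_{n+1}$ then forces $\epsilon_n=\epsilon_{n+1}$. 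Modding out by the overall projective scaling together with the kernel $(q,p)\mapsto-(q,p)$ of $\Phi$ collapses the residual sign symmetries to the $(\ZZ_2)^{n-1}$ generated by the reflections $(q_i,p_i)\mapsto(-q_i,-p_i)$ for $i\le n-1$ as stated.

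For (ii), I would invoke the Gavrilov--Vivolo identification recalled in the preceding subsection, applied to the normalized curve $\C$. The hypothesis $K\neq 0$ makes the block $F$ regular, so at the two infinite points $P_\pm\in\C$ the matrix $\lambda^{-2}L(\lambda)$ has distinct one-dimensional eigenspaces spanned by $e_n\pm\rmi e_{n+1}$ with eigenvalues $\pm\rmi K$. The eigenline bundle on $\C$ thus extends, after the canonical gluing of the fibers over $P_+$ and $P_-$ dictated by the Lax matrix, to a line bundle on $\C_{m'}$ with modulus $m=(\infty,\rmi K)+(\infty,-\rmi K)$, and the resulting eigenbundle map $e_m$ provides the isomorphism $\M_P^{A,F}/\TT_A\simeq Jac(\C_{m'})-\Theta$. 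I expect this to be the main technical hurdle: one must verify regularity of $L(\lambda)$ at every point of $\C_{m'}$ and check that the stabilizer of the pair $(A,F)$ modulo $\TT_A$ indeed acts trivially on the generalized Picard, so that $e_m$ descends to an isomorphism and not merely a covering.

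For (iii), I would trace the lifted $S^1$-action through $e_m$. The rotation $\varphi_\theta$ generated by $K$ acts on $L(\lambda)$ by conjugation with $R_\theta\in SO(2,\CC)\subset\PG_{A,F}$; on the eigenline bundle this is trivial away from $P_\pm$, while at $P_\pm$ it multiplies the fibers by $e^{\pm\rmi\theta}$ since $e_n\pm\rmi e_{n+1}$ are eigenvectors of $R_\theta$ with eigenvalues $e^{\pm\rmi\theta}$. Under the node-gluing that defines $\C_{m'}$, the ratio $e^{2\rmi\theta}$ between these two scalars is exactly the coordinate on the $\CC^*$ fiber of the extension $\CC^*\to Jac(\C_{m'})\to Jac(\C)$, so the Hamiltonian flow of $K$ generates that $\CC^*$ factor while projecting to the trivial flow on $Jac(\C)$.

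Finally, for (iv), the Lax equation \eref{eq:Lax} together with the standard AKS / $r$-matrix argument shows that the isospectral flow of $H$ on $\M_P^{A,F}$ corresponds, via $e_m$, to translation by a fixed tangent vector in $T_0 Jac(\C_{m'})$ computed from the polar behaviour of the eigenvector at infinity; since $e_m$ is an algebraic isomorphism by (ii), this translates directly into a linear flow on the generalized Jacobian. Combining (i)--(iv), the complex level set $F_{EM}^{-1}(f)$ is a finite cover of an open subset of $Jac(\C_{m'})$ on which both $H$ and $K$ act linearly; restricting to the real part yields the Arnold--Liouville tori, completes the proof of complete algebraic integrability, and in particular forces the functional independence of $\{F_1,\ldots,F_{n-1},K\}$ left open in section \ref{sec:Symmetric-Neumann-system}.
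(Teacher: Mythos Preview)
Your proposal is correct and follows essentially the same route as the paper: both invoke the Vivolo result to identify $\M_P^{A,F}$ with a generalized Jacobian, analyze the $\TT_A$-orbits through Neumann-form matrices to get the $(\ZZ_2)^{n-1}$ covering, identify the $K$-flow with the $G_F\simeq\CC^*$ factor, and cite the standard linearization result for (iv). The one organizational difference is in (ii): the paper first applies Vivolo with the \emph{full} modulus $m'=\sum_{i=1}^{n-1}(\infty,a_i)+(\infty,\rmi K)+(\infty,-\rmi K)$ to obtain $\M_P^{A,F}\simeq Jac(\C_{m'})-\Theta$, and then quotients both sides by $\TT_A\simeq(\CC^*)^{n-1}$ via a commutative diagram to reach $Jac(\C_m)-\Theta$; you instead aim the eigenbundle map directly at the two-point modulus. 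Your phrasing that ``the stabilizer of the pair $(A,F)$ modulo $\TT_A$ acts trivially on the generalized Picard'' is not quite what you need---that residual stabilizer is $G_F\simeq\CC^*$, which acts nontrivially and in fact furnishes the fiber in (iii); what you actually must check is that $\TT_A$ is precisely the kernel of $e_m:\M_P^{A,F}\to Pic(\C_m)$, and the paper's two-step diagram makes this transparent. Also, for (i) the paper explicitly notes that the Lax map is an immersion whose image meets each $\TT_A$-orbit in finitely many points, which is what upgrades ``fiber identification'' to ``covering''; you compute the fiber correctly but should add that observation.
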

\begin{proof}
If $K\not=0$ then $F$ is a regular matrix with 2 different eigenvalues
$\pm\rmi K$. By the result in \cite{Vivolo:EdMathSoc:2000} the isospectral
manifold $\M_{P}^{A,F}$ is isomorphic as an algebraic manifold to
the Zariski open subset $J(\C_{m})-\Theta$ of the generalized Jacobian
of the singularized spectral curve, given by spectral polynomial $P$
and modulus $m'=\lambda^{-1}(\infty)=\sum_{i=1}^{n-1}(\infty,a_{i})+(\infty,\rmi K)+(\infty,-\rmi K)$.
The same theorem asserts that the following diagram commutes
\[
\xymatrix{0\ar[r] & \PG_{A,F}\ar[r]\ar[d] & \M_{P}^{A,F}\ar[r]\ar[d]_{e_{m'}} & \M_{P}^{A,F}/\PG_{A,F}\ar[r]\ar[d]_{e} & 0\\
0\ar[r] & (\CC^{*})^{n}\ar[r] & Jac(\C_{m'})-\Theta\ar[r] & Jac(\C)-\Theta\ar[r] & 0}
\]
If we singularize $\C$ only by modulus $m=(\infty,\rmi K)+(\infty,-\rmi K)$,
we can insert $e_{m}:\M_{P}^{A,F}/(\CC^{*})^{n-1}\to Jac(\C_{m})-\Theta$
into above diagram 
\[
\xymatrix{ & G_{F}\ar[d]\\
\M_{P}^{A,F}\ar[r]\ar[d]_{e_{m'}} & \M_{P}^{A,F}/\TT_{A}\ar[r]\ar[d]_{e_{m}} & \M_{P}^{A,F}/\PG_{A,F}\ar[d]_{e}\\
Jac(\C_{m'})-\Theta\ar[r] & Jac(\C_{m})-\Theta\ar[r] & Jac(\C)-\Theta\\
 & \CC^{*}\ar[u]}
\]
We only have to prove that the fiber $F_{EM}^{-1}(f)$ consisting
of matrices of the form \eref{eq:lax_n+1} is a covering of the quotient
$\M_{P}^{A,F}/\TT_{A}$. By using the Lax matrix \eref{eq:lax_n+1}
we gave a parametrization 
\[
J^{A}:(q,p)\mapsto L(\lambda)=A\lambda^{2}+q\wedge p\ \lambda-q\otimes q
\]
of the quotient $\M_{P}^{A,F}/\TT_{A}$ by $(q,p)\in F_{EM}^{-1}(f)\subset(\tst)^{\CC}$.
First note that the map $J^{A}:(\tst^{n})^{\CC}\to\M_{P}^{A,F}$ is
an immersion. We will show that any orbit of $\TT_{A}$ intersects
the image of $J^{A}$ only in finite number of points. To explain
how $\PG_{A,F}$ acts on the Lax matrix \eref{eq:lax_n+1}, note
that an element $g\in\PG_{A,F}$ acts on a tensor product $x\otimes y$
of $x,y\in\CC^{n+1}$ 
\[
g:x\otimes y\mapsto(gx)\otimes((g^{-1})^{T}y)
\]
by multiplying the first factor with $g$ and the second with $(g^{-1})^{T}$.
The subgroup of $\PG_{A,F}$, for which the generic orbit lies in
the image of $J^{A}$, is given by orthogonal matrices 
\[
O(n,\CC)\cap\PG_{A,F}\simeq(\ZZ_{2})^{n-2}\times G_{F}\cap O(2,\CC).
\]
There are special points in the image of $J^{A}$ that have a large
isotropy group (take for example $q_{i}=\delta_{ij}$ and $p_{i}=\delta_{ik}$,
$k\not=j$, where the isotropy is $(\CC^{*})^{n-2}$). But the intersection
of any orbit with the image of $J^{A}$ coincides with the orbit of
$(\ZZ_{2})^{n-2}\times G_{F}\cap O(2,\CC)$. If we take the torus
$\TT_{A}<\PG_{A,F}$ consisting of diagonal matrices with $g_{n,n}=g_{n+1,n+1}$,
so that $\TT_{A}\cap G_{F}=\{Id\}$, the orbits of $\TT_{A}$ will
intersect image of $J^{A}$ only in the orbit of the finite subgroup
$(\ZZ_{2})^{n-2}$. We have proved that the level set of Lax matrices
$L(\lambda)=J^{A}(q,p)$ with fixed characteristic polynomial $P$
is an immersed submanifold in $\M_{P}^{A,F}$ that intersects the
orbits of torus $\TT_{A}\simeq(\CC^{*})^{n-1}$ in only finite number
of points and is thus a covering of the quotient $\M_{P}^{A,F}/\TT_{A}$.

The group $G_{F}$ is the complexification of the group of rotations
in $q_{n},q_{n+1}$ plane and is generated by the Hamiltonian vector
field of $K$. This proves the assertion \eref{enu:flow-K}.

To prove the assertion \eref{enu:flow-H}, note that the matrix polynomial
$M(\lambda)$ in Lax equation \eref{eq:Lax} is given as a polynomial
part $R(\lambda,L(\lambda))_{+}$ for a polynomial $R(z,w)=zw$. It
is well known that such isospectral flows are mapped by $e_{m}$ to
linear flows on the Jacobian $Jac(\C_{m})$ (see \cite{audin:CUP:1999}
for reference).
\end{proof}
Taking into account the real structure on $\C_{m}$, the Arnold-Liouville
tori can be described as a real part of the generalized Jacobian. 

\begin{thm}
\label{thm:real-ext}For $K\not=0$ and $\C$ smooth, the Arnold-Liouville
tori are $(\ZZ_{2})^{n-2}$ coverings of the real part of the generalized
Jacobian. The rotations generated by $K$ are precisely the rotations
of the fiber $S^{1}$ in the fibration $S^{1}\to Jac(\C_{m})^{\RR}\to Jac(\C)^{\RR}$,
which is the real part of the fibration $\CC^{*}\to Jac(\C_{m})\to Jac(\C)$.
\end{thm}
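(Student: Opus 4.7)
The plan is to descend the complex statement of Theorem~\ref{thm:Arnold-Jacobian} to its real locus via a natural antiholomorphic involution. I would start by observing that for real $(q,p)$ the Lax matrix $L(\lambda)=A\lambda^{2}+q\wedge p\,\lambda-q\otimes q$ has real coefficients, so $L(\lambda)\mapsto \overline{L(\bar\lambda)}$ induces an antiholomorphic involution $\sigma$ on the spectral curve $\C_{m}$; in the hyperelliptic coordinates of \eref{eq:norm_hyper_eq} for the normalization $\C$ this is $(x,w)\mapsto(\bar x,\bar w)$. Because $K\in\RR$, the two points $P_{\pm}=(\infty,\pm\rmi K)$ composing the modulus $m=P_{+}+P_{-}$ are interchanged by $\sigma$, so $m$ is $\sigma$-invariant, $\sigma$ lifts to $\C_{m}$, and the entire extension $\CC^{*}\to Jac(\C_{m})\to Jac(\C)$ becomes a sequence of morphisms defined over $\RR$. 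The eigenbundle map $e_{m}$ is manifestly compatible with complex conjugation of matrix entries, hence $\sigma$-equivariant, so it sends real points to real points.

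The next step is to identify the real form of the $\CC^{*}$-fiber. Realising this fiber as $(\CC^{*}_{P_{+}}\times\CC^{*}_{P_{-}})/\CC^{*}_{\mathrm{diag}}$, the swap of $P_{\pm}$ by $\sigma$ singles out its real part as classes $[(a,\bar a)]$, and dividing by the real diagonal yields precisely the compact circle $S^{1}$, rather than $\RR^{*}$ (which would arise if $\sigma$ fixed the two modulus points individually). Hence the real part of the extension is exactly the fibration $S^{1}\to Jac(\C_{m})^{\RR}\to Jac(\C)^{\RR}$ asserted in the theorem. By part \eref{enu:flow-K} of Theorem~\ref{thm:Arnold-Jacobian} the complex flow of $K$ sweeps out the full $\CC^{*}$-fiber; restricted to real initial conditions this flow is $2\pi$-periodic, so its image must fill out precisely the compact real form $S^{1}$, yielding the second assertion of the theorem.

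It remains to count the covering $F_{EM}^{-1}(f)^{\RR}\to Jac(\C_{m})^{\RR}-\Theta$. The complex covering from Theorem~\ref{thm:Arnold-Jacobian} has fiber $(\ZZ_{2})^{n-1}$ generated by the reflections $q_{i}\mapsto -q_{i}$ for $1\le i\le n-1$ acting through the parametrization $J^{A}$. All these reflections preserve the real locus, and the Arnold-Liouville torus is the compact connected component of $F_{EM}^{-1}(f)^{\RR}$ (compactness coming from properness of $H$). Passing to this compact real form absorbs one $\ZZ_{2}$ factor: the half-turn of $K$ at time $\pi$ lies inside $G_{F}\cap O(2,\RR)$ and acts as $(q_{n},q_{n+1})\mapsto(-q_{n},-q_{n+1})$, which through $J^{A}$ and the constraint $\sum q_{i}^{2}=1$ coincides on $\M_{P}^{A,F}/\TT_{A}$ with one distinguished product of the remaining reflections. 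Factoring out this single relation reduces the covering degree to $(\ZZ_{2})^{n-2}$.

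The principal obstacle is exactly this last bookkeeping step: one has to match the $(\ZZ_{2})^{n-1}$ action with the image of the compact $S^{1}$-flow in the real Jacobian, using the real structure on the eigenline bundle together with the orthogonal subgroup $O(n+1,\RR)\cap\PG_{A,F}$, in order to isolate exactly one redundant $\ZZ_{2}$ and to verify that no further identifications occur on the compact real locus. Everything else in the argument is a direct descent of the complex picture to the $\sigma$-fixed locus.
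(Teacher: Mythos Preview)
Your identification of the real form of the $\CC^{*}$ fiber is essentially the paper's argument. Both observe that the antiholomorphic involution on $\C$ swaps $P_{+}=(\infty,\rmi K)$ and $P_{-}=(\infty,-\rmi K)$, and then compute the induced real structure on the extension group. The paper presents the fiber as the quotient of meromorphic functions on $\C$ by those on $\C_{m}$, parametrised by $z=f(P_{+})/f(P_{-})$, so that the induced conjugation is $z\mapsto 1/\bar z$ and the fixed locus is $|z|=1$; your description via $(\CC^{*}_{P_{+}}\times\CC^{*}_{P_{-}})/\CC^{*}_{\mathrm{diag}}$ with the swap involution is a different coordinatisation of the same computation. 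The paper's proof actually stops here: it says nothing further about the $(\ZZ_{2})^{n-2}$ covering, treating that as already established.

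Your attempt to justify the drop from $(\ZZ_{2})^{n-1}$ to $(\ZZ_{2})^{n-2}$ is where the proposal goes off track. The exponent $n-2$ is already what the proof of Theorem~\ref{thm:Arnold-Jacobian} produces in the complex picture: the orthogonal subgroup $O(n+1,\CC)\cap\PG_{A,F}$ is identified there as $(\ZZ_{2})^{n-2}\times (G_{F}\cap O(2,\CC))$, and after passing to $\M_{P}^{A,F}/\TT_{A}$ the residual deck group is $(\ZZ_{2})^{n-2}$, not $(\ZZ_{2})^{n-1}$. So there is no further $\ZZ_{2}$ to absorb when restricting to the real locus, and the specific mechanism you propose --- that the half-turn $(q_{n},q_{n+1})\mapsto(-q_{n},-q_{n+1})$ ``coincides, via the constraint $\sum q_{i}^{2}=1$, with a distinguished product of the reflections $q_{i}\mapsto -q_{i}$, $i\le n-1$'' --- does not hold: those reflections act on disjoint coordinates and the sphere constraint creates no such identification. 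The covering degree comes straight from the complex count, with the real passage only affecting the fiber direction $\CC^{*}\to S^{1}$.
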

Above theorem gives us an algebraic way to describe symplectic quotient
$\tst^{n}//_{k}S^{1}$. Algebraically $\tst^{n}$ is a covering of
the relative generalized Jacobian, which is a disjoint union 
\[
\tilde{J}ac^{\RR}=\cup_{\C_{m}}Jac^{\RR}(\C_{m})
\]
over the space of curves $\C_{m}$ corresponding to the real values
of energy momentum map. The symplectic quotient of $\tilde{J}ac^{\RR}//_{k}S^{1}$
is then the relative Jacobian over the space of normalized spectral
curves with fixed value of $K$
\[
\cup_{\C_{m};\, K=k}Jac^{\RR}(\C).
\]

\begin{cor}
\label{cor:reduced-jac(C)}The complex level set of $(F_{1},\ldots,F_{n-1})$
of the symplectic quotient of the confluent Neumann system is a $(\ZZ_{2}^{})^{n-2}$
covering of the quotient $\M_{P}^{A,F}/\PG_{A,F}$. The manifold $\M_{P}^{A,F}/\PG_{A,F}$
is isomorphic to the open subset $Jac(\C)-\Theta$ of the Jacobian
of the normalized spectral curve. 
\end{cor}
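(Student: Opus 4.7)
The plan is to deduce the corollary from the commutative diagram in the proof of Theorem \ref{thm:Arnold-Jacobian} by taking a further quotient by $G_{F}\simeq\CC^{*}$, the group generated by the Hamiltonian flow of $K$. The complexified symplectic quotient by $S^{1}$ is precisely the reduction by $G_{F}$; since $F_{1},\ldots,F_{n-1}$ Poisson-commute with $K$, they descend to $M_{k}^{\CC}=(\tst^{n})^{\CC}//_{k}\CC^{*}$, and the complex level set in the statement is therefore $F_{EM}^{-1}(f)/G_{F}$ for $f=(f_{1},\ldots,f_{n-1},k)$.

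I would then quotient the top row of the diagram in the proof of Theorem \ref{thm:Arnold-Jacobian} by $G_{F}$. Using the decomposition $\PG_{A,F}\simeq\TT_{A}\times G_{F}$, the further $G_{F}$-quotient of $\M_{P}^{A,F}/\TT_{A}$ is $\M_{P}^{A,F}/\PG_{A,F}$. On the algebraic side, item \eref{enu:flow-K} of Theorem \ref{thm:Arnold-Jacobian} identifies the $G_{F}$-action on $Jac(\C_{m})-\Theta$ with the fiber $\CC^{*}$ of the extension $\CC^{*}\to Jac(\C_{m})\to Jac(\C)$, so the quotient is $Jac(\C)-\Theta$. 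This establishes the second assertion, namely $\M_{P}^{A,F}/\PG_{A,F}\simeq Jac(\C)-\Theta$.

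For the covering degree I would argue that it is preserved under the $G_{F}$-quotient. The deck transformations of $F_{EM}^{-1}(f)\to\M_{P}^{A,F}/\TT_{A}$ are the reflections $q_{i}\mapsto-q_{i}$ with $1\le i\le n-1$, subject to the single relation that their total product $\mathrm{diag}(-1,\ldots,-1,1,1)$ already lies in $\TT_{A}$, giving an effective group $(\ZZ_{2})^{n-2}$. These reflections commute with $G_{F}$ and act nontrivially only on the first $n-1$ coordinates, while $G_{F}$ acts only in the $(q_{n},q_{n+1})$-plane, so no nontrivial composition of reflections can coincide with an element of $G_{F}$ on the fibers of $F_{EM}^{-1}(f)\to F_{EM}^{-1}(f)/G_{F}$. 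Hence the covering degree remains $(\ZZ_{2})^{n-2}$.

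The main obstacle is precisely this last point---confirming that no deck transformation collapses to the identity after the additional $G_{F}$-quotient. The disjoint-supports observation above rules this out, but one should also verify that the relation in $\TT_{A}$ used to reduce from $(\ZZ_{2})^{n-1}$ to $(\ZZ_{2})^{n-2}$ is unaffected by the further quotient, which follows from the condition $\TT_{A}\cap G_{F}=\{Id\}$ built into the definition of $\TT_{A}$ in the proof of Theorem \ref{thm:Arnold-Jacobian}.
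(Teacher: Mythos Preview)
Your proposal is correct and follows the route the paper intends: the corollary is stated without a separate proof and is meant to be read off directly from the second commutative diagram in the proof of Theorem~\ref{thm:Arnold-Jacobian} together with item~\eref{enu:flow-K}, exactly as you do by passing to the further quotient by $G_{F}\simeq\CC^{*}$.

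Your treatment of the covering degree is in fact more careful than the paper's. The paper is not fully consistent here: the statement of Theorem~\ref{thm:Arnold-Jacobian} says the fiber is the orbit of $(\ZZ_{2})^{n-1}$, while its proof computes $O(n+1,\CC)\cap\PG_{A,F}\simeq(\ZZ_{2})^{n-2}\times(G_{F}\cap O(2,\CC))$ and concludes that the $\TT_{A}$-orbits meet the image of $J^{A}$ in a $(\ZZ_{2})^{n-2}$-orbit. Your observation that the product of all $n-1$ reflections $\mathrm{diag}(-1,\ldots,-1,1,1)$ already lies in $\TT_{A}$ is precisely the relation reconciling these two counts, and your disjoint-supports argument correctly shows that nothing further collapses under the additional $G_{F}$-quotient. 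This is a welcome clarification of a point the paper leaves implicit.
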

This result agrees perfectly with the results obtained previously
for the Rosochatius system \cite{Hurtubise:CMP:1990}.

\begin{proof}
[Proof of the theorem \ref{thm:real-ext}]Note that the eigenvalues
of $F$ are $\pm\rmi K$. Note also that the value of $\mu$ at the
points $P_{\pm}$ equals to the eigenvalues of $F$, so $P_{\pm}=(\infty,\pm\rmi K)$.
On $\C$ there is a natural real structure $J$ induced by the conjugation
on $(\lambda,\mu)\in\CC^{2}$. The points $P_{\pm}$ that are glued
in the singular point form a conjugate pair $P_{\pm}=JP_{\mp}$. If
we follow the argument in \cite{audin:CMP:2002} we can find the real
structure of the fiber $\CC^{*}$ in the extension $\CC^{*}\to Jac(\C_{m})\to Jac(\C)$.
Note that $Pic(\C)$ is defined as the space of all divisors modulo
divisors of meromorphic functions on $\C$, whereas $Pic(\C_{m})$
is given by the divisors on $\C$ that avoid $P_{\pm}$ modulo meromorphic
functions on $\C_{m}$. So the fiber $\CC^{*}$ is given by meromorphic
functions on $\C$ modulo meromorphic functions on $\C_{m}$. Since
we obtained $\C_{m}$ by gluing two points $P_{\pm}$, a function
$f$ on $\C$ defines a function on $\C_{m}$ if $f(P_{+})=f(P_{-})$
or equivalently $f(P_{+})/f(P_{-})=1$. For a divisor of any function
$f$ on $\C$, the number $z=f(P_{+})/f(P_{-})\in\CC^{*}$ determines
its class in the Picard group $Pic(\C_{m})$. So if $P_{\pm}=JP_{\mp}$,
then the real structure on the fiber $\CC^{*}$ is given by the map
\[
z=\frac{f(P_{+})}{f(P_{-})}\to\frac{\overline{f(JP_{+})}}{\overline{f(JP_{-})}}=\frac{\overline{f(P_{-})}}{\overline{f(P_{+})}}=\frac{1}{\bar{z}}
\]
and the real part of $\CC^{*}$ is the unit circle $S^{1}$ given
by $z\bar{z}=1$. In contrast, when the singular points are real $P_{\pm}=JP_{\pm}$,
the real structure on $\CC^{*}$ is given by the conjugation and the
real part of $\CC^{*}$ is $\RR^{*}$. 
\end{proof}

\subsection{Bifurcation diagram}

We can use the normalized spectral curve $\C$ to describe the singular
locus of energy momentum map.

\begin{prop}
\label{pro:bif-diag}The vector $(f_{1},\ldots,f_{n-1},k)\in\RR^{n}$
is a regular value of the real momentum map 
\[
F_{EM}=(F_{1},\ldots,F_{n-1},K)
\]
if and only if the normalized spectral curve $\C$ is smooth. The
singular locus of the map $F_{EM}$ consist of 
\begin{itemize}
\item hyperplanes $F_{i}=0$
\item zero level set of the discriminant of $Q(x)$ from \eref{eq:qx}
\item the codimension 2 hyperplane defined by $K=0$ and $H=\frac{1}{2}a_{i}$
\end{itemize}
\end{prop}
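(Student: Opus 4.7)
The plan is to combine the identification of complex level sets of $F_{EM}$ with open subsets of the generalized Jacobian from Theorems \ref{thm:Arnold-Jacobian} and \ref{thm:real-ext} together with a direct real analysis of where the $S^{1}$-action fails to be free.

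For the main equivalence, by Theorem \ref{thm:Arnold-Jacobian} the level set $F_{EM}^{-1}(f)$ is a finite covering of the open subset $Jac(\C_m)-\Theta$ of the generalized Jacobian of the singularized normalized spectral curve. When $\C$ is smooth, $Jac(\C_m)$ is a smooth algebraic group of dimension $g_{a}(\C_m)=g(\C)+1=n$, the covering is \'etale, and the level set is a smooth $n$-dimensional manifold at every point of which $F_{EM}$ is a submersion: this is the regular case. Conversely, when $\C$ acquires a node or cusp, the eigenline bundle construction of Section \ref{sec:Lax-representation-of} degenerates and $F_{EM}$ fails to be a submersion along the corresponding level set, forcing $f$ to be critical.

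To locate the singular spectral curves explicitly, note that $\C : w^{2}=Q(x)\prod_{i=1}^{n-1}(a_i-x)$ is singular iff the right-hand polynomial has a multiple root. A factor $(a_i-x)$ with $i<n$ is absorbed into $Q$ iff $Q(a_i)=0$; substituting into \eref{eq:qx} gives $Q(a_i)=F_i\prod_{j\neq i}(a_j-a_i)$, cut out exactly by the hyperplanes $F_i=0$. Any other multiple root is internal to $Q$, giving the discriminant hypersurface $\mathrm{disc}(Q)=0$. For the codimension-two stratum $\{K=0,\ H=\frac{1}{2}a_i\}$ I would turn to the $S^{1}$-fixed-point set $F=\{q_n=q_{n+1}=p_n=p_{n+1}=0\}\simeq T^{*}S^{n-2}$: a short computation shows that $X_K=0$ on $F$ and that the partial derivatives of each $F_j$ in the normal directions to $F$ vanish identically on $F$, so the rank of $dF_{EM}$ drops on all of $F$. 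The equilibria $q=\pm e_i$, $p=0$ for $i<n$ of the restricted Neumann system on $F$ then project under $F_{EM}$ to $K=0$, $H=\frac{1}{2}a_i$; together with the equilibria $q=\pm e_n$, $p=0$ one obtains all the listed codimension-two hyperplanes.

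The main technical obstacle is that this last stratum is not visible from the complex spectral curve alone: in the complex picture it already sits inside the discriminant locus, and detaches as an independent real critical component only because the $\CC^{*}$-fiber of $\CC^{*}\to Jac(\C_m)\to Jac(\C)$ degenerates from $S^{1}$ to $\RR^{*}$ at $K=0$, as noted in the proof of Theorem \ref{thm:real-ext}. The isospectral argument must therefore be supplemented by the direct real differential-geometric analysis above, both to verify that these codimension-two values are genuine critical values of the real map $F_{EM}$ and to conclude that no further critical values arise outside the three listed families.
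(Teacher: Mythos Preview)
Your overall strategy matches the paper's: invoke Theorem \ref{thm:Arnold-Jacobian} for the dimension count in the ``if'' direction, and locate the singular spectral curves by analysing the multiple roots of $Q(x)\prod_{i=1}^{n-1}(a_i-x)$. However, two genuine gaps remain.

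First, Theorem \ref{thm:Arnold-Jacobian} is stated only under the hypothesis $K\neq 0$, so your argument does not cover the case $K=0$ with $\C$ smooth. The paper treats this separately: it identifies the level set of $(F_1,\ldots,F_{n-1})$ with a cover of the real part of $Jac(\C)-\Theta$, of dimension $n-1$; then uses $S^{1}$-invariance of the $F_i$ to conclude that $X_K$ is independent of $X_{F_1},\ldots,X_{F_{n-1}}$ wherever $X_K\neq 0$; and finally argues that $X_K=0$ forces $X_H=0$, hence a rank drop among the $X_{F_i}$, hence singularity of $\C$. Your last paragraph correctly flags that a supplementary argument is needed at $K=0$, but your fixed-point-set computation only shows that the set $F$ consists of critical points; it does not show that points with $K=0$ lying \emph{outside} $F$ are regular when $\C$ is smooth. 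Second, your converse sentence ``when $\C$ acquires a node or cusp, the eigenline bundle construction degenerates and $F_{EM}$ fails to be a submersion'' is an assertion, not a proof: degeneration of the eigenline description on a singular curve does not by itself force a rank drop of $dF_{EM}$. The paper instead exhibits explicit critical points on each component: for $F_i=0$ it takes $(q,p)$ with $q_i=p_i=0$, where one checks $dF_i=0$ directly; for the discriminant hypersurface it defers to Audin. Your formula $Q(a_i)=F_i\prod_{j\neq i}(a_j-a_i)$ is correct and identifies the components, but you still owe an actual critical point on each. (Minor aside: on the fixed-point set it already suffices that $dK=0$, which is immediate from $K=q_np_{n+1}-q_{n+1}p_n$; the stronger claim about normal derivatives of every $F_j$ is unnecessary.)
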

\begin{proof}
The if part follows directly from theorem \ref{thm:Arnold-Jacobian}.
If $\C$ is smooth and $K\not=0$, then the level set of $F_{EM}$
is locally isomorphic to the real part of $Jac(\C_{m})-\Theta$, which
is of dimension $n$. Hence the rank of the differential of $F_{EM}$
is also $n$. If $K=0$ and $\C$ is smooth, the level set of $(F_{1},\ldots,F_{n-1})$
is locally isomorphic to the real part of the isospectral manifold
$\M_{P}^{A,F}/\PG_{A,F}$, which is in turn isomorphic to the real
part of $Jac(\C)-\Theta$. Since the dimension of $Jac(\C)$ is $n-1$,
the rank of the differential of $(F_{1},\ldots,F_{n-1})$ is also
$n-1$. The integrals $F_{i}$ are invariant to the rotations generated
by $K$ and therefore their Hamiltonian vector fields $X_{F_{1}},\ldots,X_{F_{n-1}}$
are independent from $X_{K}$. We are left to show that if $X_{K}=0$
the normalized curve $\C$ is not smooth. It is easy to see that the
case $X_{K}=0$ appears only if $X_{H}=0$ but then the rank of $X_{F_{1}},\ldots,X_{F_{n-1}}$
is not full and the curve $\C$ has to be singular. 

To prove the only if part let us consider case by case the components
of the singular locus. The curve $\C$ is singular if and only if
the polynomial $\prod(a_{i}-x)Q(x)$ has a double root. This can
happen in two cases
\begin{enumerate}
\item $a_{i}$ is a zero of $Q(x)$, this happens when $F_{i}=0$
\item $Q(x)$ has double zero, this happens if the discriminant of $Q$
is zero.
\end{enumerate}
The hyperplanes $F_{i}=0$ are singular, since for the points $(q,p)$
with $q_{i}=p_{i}=0$ the differential $\rmd F_{i}=0$. The proof
that the discriminant of $Q$ is singular can be found in \cite{Audin:ExM:1994}
and I will omit it here, because it is very specific and beyond the
scope of this article. 
\end{proof}
\begin{rem}
For values of $a_{n}>a_{j}$ for some $1\le j<n$ the bifurcation
diagram has a singular ``thread'' of focus-focus singularities
defined by values $K=0$ and $H=\frac{1}{2}a_{j}$. This would suggest
the presence of nontrivial monodromy. Indeed for two degrees of freedom,
the singular level set corresponding to the isolated singular value
is a union of two spheres with two pairs of points identified. By
the general result in \cite{Zung:CompM:2003} it follows that the
monodromy is nontrivial and equals 
\[
\left(\begin{array}{cc}
1 & 0\\
\varepsilon & 1\end{array}\right)
\]
with $\varepsilon=2$ being the number of spheres in the singular
level set. This can also be checked by direct calculation.
\end{rem}
The image of the energy momentum map $(K,2H)$ for confluent Neumann system
with two degrees of freedom is depicted in \fref{fig:bif}.
Regular values lie in the shadowed area while the solid curves contain
singular values. The points $(0,a_{1})$ and $(0,a_{2})$ are the
images of fixed points. Note that when passing from $a_{2}>a_{1}$
to $a_{1}>a_{2}$, the pair of lines becomes imaginary and only their
intersection - isolated focus-focus point $(0,a_{1})$ - remains real. 

\begin{figure}
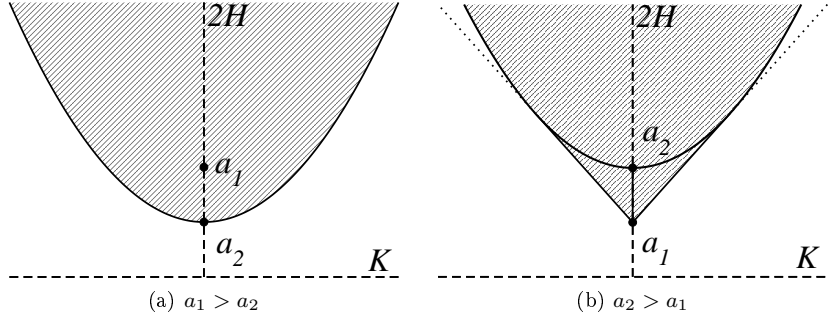

\begin{centering}
\subfloat[\(a_1>a_2\)]{\includegraphics[clip,width=0.4\textwidth]{figure1a}}
\quad
\subfloat[\(a_2>a_1\)]{\includegraphics[clip,width=0.4\textwidth]{figure1b}}
\caption{\label{fig:bif}Image of the energy-momentum map $(K,2H)$ for confluent Neumann system
with potential matrix $A=\mathrm{diag}(a_{1},a_{2},a_{2})$. Regular
values lie in the shadowed area while the solid curves contain singular
values. The points $(0,a_{1})$ and $(0,a_{2})$ are the images of
fixed points.}
\end{centering}
\end{figure}

\subsection{Note on the case $K=0$}

We have seen in the previous section that the case $K=0$ is significantly
different from the generic case $K\not=0$. The problem lies in the
following observation. The space of hyperelliptic curves that appear
in the description of Neumann system is parametrized by $K^{2}$ and
not by $K$. As a consequence, the relative generalized Jacobian is
degenerated for $K=0$. The phase space of complexified Neumann system
is therefore ``folded'' into relative Jacobian by the map $K\to K^{2}$.
The map between original phase space and its image in relative Jacobian
is singular at $K=0$. It is therefore illusory to expect that we
can describe the whole phase space including the fiber $K=0$ by algebro-geometric
methods. Different approach has to be considered that would study
the ``fold'' given by $K^{2}$ in more detail. This is to be covered
in our future work.

\section{Conclusions and discussion}

We have proved the algebraic integrability of the confluent Neumann
system by proving the theorem \ref{thm:Arnold-Jacobian}, which describes
Arnold-Liouville tori in terms of the generalized Jacobians of singular
spectral curve. We performed the reduction of the rotational symmetry
and established a firm relationship between symplectic reduction and
desingularization of the spectral curve (corollary \ref{cor:reduced-jac(C)}).
Most of our results very likely generalize to any Moser system arising
from the rank 2 perturbations of a fixed matrix with a double eigenvalue.
From our work and previous examples \cite{audin:CMP:2002,Zhivkov:Ens:1998}
it appears that there generally is a relation between the rotational
symmetry and singularities of spectral curves. We have exposed this
relationship explicitly in our case and have seen that the reduction
of $S^{1}$ symmetry reveals itself in the algebraic description as
a reduction from generalized Jacobian of the singular spectral curve
to the {}``ordinary'' Jacobian of the normalized spectral curve.
One can say that the desingularization of the spectral curve corresponds
to the symplectic quotient. Unfortunately, this relation is not a
general phenomenon as we can see when considering the case $K=0$.
One can speculate that the appearance of the global action of a compact
group is related to the presence of a {}``generic'' singularity
but there is no general proof yet. Note that the generic Neumann system
has no symmetries given by a compact group. 

The singularities of the spectral curve appeared in two different
roles in our study. The singularity that is a consequence of the confluency
is ``generic'' in that it appears uniformly for all values of
the energy-momentum map. The rotational symmetry shows as the extension
by complexified group of rotations $\CC^{*}$ defining the generalized
Jacobian that appears globally. The ``sporadic'' singularities,
which correspond to the singular values of the energy momentum map
(see proposition \ref{pro:bif-diag}) are strictly a local phenomenon.
In those cases the extension by $\CC^{*}$ and the resulting rotational
symmetry does not extend globally. Algebraically speaking both singularities
are the same, but the ``generic'' singularity appears globally
and thus give rise to a rotational symmetry. Sporadic singularities
on the other hand appear when the level sets of energy momentum map
are singular (orbits of lower dimension, heteroclinic and homoclinic
orbits). It would be interesting to describe the isospectral sets
of the singular spectral curves. Note that when we introduced generic
singularity we made sure that we used the subset of the singular isospectral
set, consisting of regular Lax matrices. In the study of sporadic
singularities, non regular part of the isospectral set should not
be avoided. It is our conjecture that the singular isospectral sets
that induce homoclinic or heteroclinic orbits should pose an obstruction
to the existence of global action of compact groups.

In a somewhat more ambitious and speculative vein, one could study
the relationship between symmetries of certain PDE's and generic singularities
in appropriate spectral curves of infinite genus. Maxwell-Bloch equation
for example can be viewed as a chain of confluent Neumann systems\cite{Saksida:JPA:2005,Saksida:Nonlin:2006},
whose symmetries indeed reflect in a symmetry of the whole Maxwell-Bloch
system \cite{Saksida:SIGMA:2006}. The description of Maxwell-Bloch
system with generalized Jacobians of singular spectral curves of infinite
genus should be to some extent analogous to our results.

\ack{}{I would like to thank Pavle Saksida for proposing and discussing
the subject and Michèle Audin for fruitful discussions and hospitality
while visiting \emph{Institut de Recherche Mathématique Avancée} at
Strasbourg where part of this work was done. I would also like to
acknowledge the financial support of the French government. }

\section*{References}{\bibliographystyle{unsrt}
\bibliography{martin}
}
\end{document}